\newtheorem{theorem}{Theorem}
\newtheorem{corollary}{Corollary}
\newcommand{\rf}{{\tilde{\rho}}\xspace} 
\newcommand{\while}{{\tt whileloop}\xspace}
\newcommand{\for}{{\tt forloop}\xspace}
\newcommand{\cmin}{{c^{\rm min}}\xspace} 
\newcommand{\bgt}{{\bf{c}}\xspace} 
\newcommand{\unst}{\mbox{$\mathsf{UNS}_{\alpha}$}\xspace} 
\newcommand{\bs}{\backslash}
\newcommand{\dsys}{{\mathcal{I}}\xspace} 
\newcommand{\bsRand}{{\sf rand}\xspace} 
\newcommand{\bsRMax}{{\sf rMax}\xspace} 
\newcommand{\mytitle}{\LARGE Nonmonontone Submodular Maximization under Routing Constraints \thanks{\{cs\_haotian, liraorr57, sungd\}@bjfu.edu.cn, zewei.wu@ipm.edu.mo}}
\title{\mytitle}
\date{December 2022}
\author{
{Haotian Zhang} 
\And
{Rao Li} 
\And
{Zewei Wu} 
\And
{Guodong Sun} 
}
\begin{document}
\maketitle

\begin{abstract}
In machine learning and big data, the optimization objectives based on set-cover, entropy, diversity, influence, feature selection, etc. are commonly modeled as submodular functions.  Submodular (function) maximization is generally NP-hard, even in the absence of constraints. 
Recently, submodular maximization has been successfully investigated for the settings where the objective function is monotone or the constraint is computation-tractable. However, maximizing nonmonotone submodular function with complex constraints is not yet well-understood. 
In this paper, we consider the nonmonotone submodular maximization with a cost budget or feasibility constraint (particularly from route planning) that is generally NP-hard to evaluate. This is a very common issue in machine learning, big data, and robotics. This problem is NP-hard, and on top of that, its constraint evaluation is likewise NP-hard, which adds an additional layer of complexity. So far, few studies have been devoted to proposing effective solutions, leaving this problem currently unclear.
In this paper, we first present an iterated greedy algorithm, which offers an approximate solution. Then we develop the proof machinery to demonstrate that our algorithm is a bicriterion approximation algorithm: it can accomplish a constant-factor approximation to the optimal algorithm, while keeping the over-budget tightly bounded. We also look at practical concerns for  striking a balance between time complexity and over-budget. Finally, we conduct numeric experiments on two concrete examples to show our design's efficacy in real-world scenarios. 
\end{abstract}

\keywords{
submodular maximization \and  routing constraint \and  
non-monotonicity \and  bi-criterion approximation.
}

\section{Introduction}\label{sec:introduction}

Submodularity fundamentally captures a diminishing-return property of set function:  the reward of adding an item into a set will decrease as this set grows. Submodular maximization aims at finding a subset of some ground set such that the submodular objective function can be maximized with some (or without any) constraints on cost or feasibility. Submodular function maximization appears in a wide variety of scenarios, from combinatorial optimization to machine learning, such as the coverage problem, facility location planning, information gathering, feature selection, data summarization, factorization in graphical models, and so forth~\citep{Krause-2012-submodular-surv,Wu-2018-submodular,Bilmes-2022-submodular-surv}. 

In general, submodular maximization problems are NP-hard, regardless of whether they are constrained or not. A lot of  approximation algorithms with guaranteed performance have been presented to maximize (1) unconstrained functions or (2) constrained-but-monotone functions. 
So far, however, the \emph{constrained nonmonotone submodular maximization} has not been fairly well-understood. 
Recently, a number of studies have tried to maximize nonmonotone submodular functions with additive or cardinality-based constraints, and presented algorithms with provable guarantees. 
In practical settings, however, the constraints are often more complex, or even infeasible to evaluate in poly-time~\citep{Wu-2019-wpt-mobile,Amantidis-2019-submodular,Jakkala-2022-submodular-app,Durr-2020-submodular}. For instance, in the big-data system with networked processors, multiple  processors are often selected to concurrently execute a computation task on their local datasets. 
For timely applications, these selected processors are orchestrated as a delay-minimum multicast routing tree and expected to return to the user more representative results within a time threshold. 
Another example of nonmonotone submodular maximization is to complete an informative data collection with an energy-budgeted robot, which passes through a TSP route on some points of interest to collect data.
These two examples will be detailed in section \ref{sub:problem}. Their objective functions are nonmonotone submodular, and their constraint evaluation is a cost-minimization problem, which is typically NP-Hard and more complex than evaluating additive or cardinality-based constraints. 

In this paper, our goal is to maximize nonmonotone submodular function with the constraints that need to be evaluated by some route planning. 
To the best of our knowledge, no literature has investigated such a submodular-maximization setting constrained by a routing budget.  
Part of our problem's difficulty stems from its submodularity, but a great deal more of the difficulty is attributed to its computationally-intractable routing constraints, which hinder the development of constant-factor approximate solution. 
In response to this issue, we present an iterated two-stage algorithm framework, which uses any poly-time, $(1+\theta)$-approximation algorithm to evaluate whether the routing cost stays under budget. We elaborately draw a connection between the error parameter $\theta$ and our algorithm's overall performance. Additionally, by exploring the routing cost function, we prove that it belongs to a kind of $k$-independence system, where $k$ is loosely upper-bounded by the ground set's size and usually a positive constant in practical settings. Such exploration offers our algorithm a lower bound for performance guarantee. Based on all the above, we develop the proof machinery that proves our algorithm can achieve a $[1/4k, 1+\theta]$-bicriterion approximation solution; that is, our objective value is at least as good as $1/4k$ of the optimal one, while the ratio of over-budget is upper bounded by $\theta$ (which is usually less than one or can be any small in some cases).

The remainder of this paper is organized as follows.  Section \ref{sec:RelatedWork} discusses the related works. Section \ref{sec:ModelProblem} gives the necessary preliminaries and formulates our problem. Section \ref{sec:Designs} details the designs of our algorithm and analyze its performance bound. Section \ref{sec:Experiment} numerically evaluates our designs with two case studies. Finally, section \ref{sec:Conclustion} concludes this paper.

\section{Related Work}\label{sec:RelatedWork}

Submodular function maximization subsumes a wide range of NP-Hard combinatorial optimization problems, and then, has long been studied~\cite{Krause-2012-submodular-surv}. The unconstrained submodular maximization is NP-hard but well-approximable. For the unconstrained settings, if the objective function is nonnegative and monotone, the simple greedy policy that iteratively selects the item with the highest marginal profit can achieve a poly-time, $(1-1/e)$-approximation solution. In practice, maximizing submodular function is usually restricted by  some feasibility or cost constraints. In this section, therefore, we will only give a brief introduction to some main studies on the constrained submodular maximization and their results, some of which are listed in Table \ref{tab:ratios}.

\begin{table}[t]
\renewcommand\arraystretch{1.15}	
\centering
\caption{Main results of submodular maximization}\label{tab:ratios}
\small
\begin{tabularx}{\linewidth}{lcrX}\toprule
paper (year) & \thead{objective function} & \thead{constraint class} & \thead{approximation ratio} \\ \midrule
\cite{Nemhauser-1978-submodular} (1978) & monotone  & cardinality & $1-1/e\approx 0.632$  \\
\citep{Krause-2005-submodular} (2005) & monotone & cardinality & randomized algo.: $1-1/e-\varepsilon$, $\varepsilon>0$ is any small \\
\citep{Sviridenko-2004-submodular} (2004)  & monotone & 1-knapsack & $1-1/e$  \\
\citep{Calinescu-2011-submodular-matroid} (2011)  & monotone & $1$-matroid & $1-1/e$ \\
\citep{Badanidiyuru-2014-submodular} (2014)  & monotone  & $k$-system $l$-knapsack & $1/(k+2l+1+\varepsilon)$, $\varepsilon>0$ determines time complexity \\
\citep{Zhang-2016-submodular-rt} (2016) &  monotone & $\alpha$-submodular, routing & $[\frac{1}{2}(1-1/e), p(\alpha, k_c,\psi)]$, bi-criterion   \\ 
\citep{Wu-2019-wpt-mobile} (2019) & monotone & TSP routing &  $[\frac{1}{4}(1-1/e), p(\alpha, k_c,\psi)]$, bi-criterion   \\ 
\citep{Buchbinder-2012-nonmonotone-submodular} (2012) & nonmonotone & -- & deterministic algo.: 1/2; randomized algo.: 2/5\\
\citep{Feldman-2011-submodular-matroid} (2011) & nonmonotone & 1-matroid & $1/e\approx 0.368$  \\
\citep{Buchbinder-2014-nonmonotone-submodular} (2014) & nonmonotone & cardinality & $1/e+0.004\sim 1/2$  \\
\citep{Lee-2009-submodular} (2009) & nonmonotone & $k$-matroid & $\frac{1}{k+2+1/k+\epsilon}$, $\epsilon>0$ is any small  \\
\citep{AAGK-2010-nonsubmodular} (2010) & nonmonotone & $k$-system & $\frac{k}{(1+1/\alpha)(k+1)^2}\approx \frac{1}{(1+1/\alpha)k}$, $0<\alpha<1$ \\
\citep{Mirzasoleiman-2016-submodular} (2016) & nonmonotone & $k$-system & $\frac{k}{(k+1)(2k+1)}\approx\frac{1}{2k}$ \\
\citep{Feldman-2017-nonmonotone-submodular} (2017) & nonmonotone & $k$-system, $k$-extendible & $\frac{1}{k+\mathcal{O}(\sqrt{k})}$ \\
\citep{Shi-2020-submodular-psystem} (2020) & nonmonotone & $k$-system & $\frac{1}{2k+3+1/k}$ \\
\citep{Tang-2022-submodular} (2022) & nonmonotone & $k$-system & randomized algo.:  $\frac{1}{2k+4}$ for adaptive submodularity\\
\bottomrule
\end{tabularx}
\end{table}

\subsection{Monotone Submodular Maximization}

In literature, most of efforts are put on maximizing monotone submodular functions with various types of constraints, and constant-factor approximation algorithms are presented. 

\textbf{Cardinality Constraint}. \citet{Nemhauser-1978-submodular} first give the near-optimal solution for the cardinality setting, with an approximation ratio of $(1-1/e)$. 
\citet{Krause-2005-submodular} give a randomized algorithm to achieve $(1-1/e+\varepsilon)$-approximation with high probability (here $\varepsilon>0$ can be any small). 
\citet{Balkanski-2018-submodular}  study the adaptive complexity of maximizing a submodular function with a cardinality constraint, and prove that there exists $\mathcal{O}(\log n)$-adaptive algorithm to achieve a $(1/3-\varepsilon)$-approximation for any small $\varepsilon>0$.
In \cite{Breuer-2020-submodular-cardi}, the authors introduce a parallel algorithm, and the performance is arbitrarily close to $(1-1/e)$. 

\textbf{Matroid or Knapsack Constraint}. \citet{Calinescu-2011-submodular-matroid} give a $(1-1/e)$-approximation algorithm for monotone submodular maximization subject to a general matroid constraint.
For the settings with $k$-matroid constraints, \citet{Nemhauser-1978-submodular} proposes a $1/(k+1)$-approximation algorithm. 
\citet{Badanidiyuru-2014-submodular} use the continuous greedy policy to design a $1/(k+2l+1+\varepsilon)$-approximation for the intersection of a $k$-system and linear $l$-knapsack  constraints. 

\textbf{More Complex Constraint}.  A more recent work~\cite{Zhang-2016-submodular-rt} considers the monotone submodular maximization with an $\alpha$-submodular constraint (such as routing constraint), where evaluating the constraint is assumed to be NP-hard. The authors give a bi-criterion algorithm (i.e., their solution is bounded in performance, though it will to some extent disagree with the original constraint), with an approximation ratio of $(1-1/e)/2$; the constraint violence is bounded by a polynomial $p(\alpha, k_c, \psi)$, where $k_c$ is the curvature of the constraint function and $\psi$, the approximation ratio of the algorithm used to evaluate the constraint. 
\citet{Wu-2019-wpt-mobile} model their wireless charging scheduling problem as a monotone submodular maximization problem restricted by TSP routing; they replace their objective function with a surrogate function and propose a $(1-1/e)/4$-approximation bi-criterion algorithm. 
Nevertheless, these two studies' objective functions are monotone, and their analytical machinery cannot be applied to nonmonotone cases at all. 

\subsection{nonmonotone Submodular Maximization} 

Maximizing nonmonotone submodular functions is generally more difficult and usually investigated  in the cases without any constraints, or just with cardinality, convex-set, or packing-type constraints. 
For the unconstrained nonmonotone cases, \citet{Buchbinder-2012-nonmonotone-submodular} propose a local search-based exact algorithm and a randomized algorithm, both of which can achieve approximation ratios of $1/2$ and $2/5$, respectively. Next we give main results of maximizing constrained nonmonotone submodular functions.

\textbf{Cardinality Constraint}. \citet{Buchbinder-2014-nonmonotone-submodular} present an algorithm with constant factor ranging from $(1/e + 0.004)$ to 1/2, and a 0.356-approximation algorithm under an exact cardinality constraint. 
\citet{Wang-submodular-partition} investigate the submodular partitioning problem with cardinality constraint, which divides a ground set into $m$ blocks to maximize the evaluation of the minimum block under constraints. They propose a greedy-based algorithmic framework to achieve an $\Omega(1/m)$-approximation. 

\textbf{Matroid or Knapsack Constraint}. In the case with a single matroid constraint, \citet{Feldman-2011-submodular-matroid} design a $1/e$-approximation greedy algorithm.
\citet{Lee-2009-submodular} give the first constant-factor results for nonmonotone settings under $k$-matroid or $k$-knapsack constraints; the approximation ratios are $1/(k+2+1/k+\epsilon)$ and $(0.2-\epsilon)$ for the $k$-matroid and $k$-knapsack cases, respectively, where $\epsilon>0$ is any small.

\textbf{$k$-independence System Constraint}. \citet{AAGK-2010-nonsubmodular} investigate the algorithmic framework of nonmonotone submodular maximization constrained by a $k$-independence system ($k$-system, in short), which is a generalization of intersection of $k$ matroids. Their algorithm achieves an approximation ratio of $\frac{k}{(1+1/\alpha)(k+1)^2}\approx \frac{1}{(1+1/\alpha)k}$,  where $0<\alpha<1$ is the approximation guarantee for unconstrained nonmonotone submodular maximization. This result has remained competitive until now, and has motivated a number of studies focusing on the nonmonotone settings with $k$-system constraint. 
\citet{Mirzasoleiman-2016-submodular} follow the above algorithm and slightly reduce the above approximation ratio down to $\frac{k}{(k+1)(2k+1)}$. 
Also adopting the iterative algorithmic framework of~\cite{AAGK-2010-nonsubmodular}, \citet{Feldman-2017-nonmonotone-submodular} propose a deterministic algorithm that achieves an approximation ratio of $\frac{1}{k+\mathcal{O}(\sqrt{k})}$, and furthermore, they propose a randomized $\frac{k}{(k+1)^2}$-approximation algorithm for the nonmonotone settings with $k$-extensible constraint. 
\citet{Shi-2020-submodular-psystem} extend the approach in \cite{Lee-2009-submodular} to make it suitable to the $k$-system constraint, and present an algorithm with the factor of $\frac{1}{2k+3+1/k}$; the authors also show that, if $k<8$, their algorithm will outperform that of \cite{Feldman-2017-nonmonotone-submodular}. 
\citet{Tang-2022-submodular} proposes a sampling-based randomized algorithm for maximizing the $k$-system-constrained nonmonotone adaptive submodular function, and this algorithm achieves an approximation ratio of $1/(2k+4)$.

To date, the algorithms for nonmonotone submodular maximization under $k$-system have almost followed the seminal framework proposed in~\cite{AAGK-2010-nonsubmodular}. All these algorithms are $\frac{1}{\Omega(k)}$-approximation, merely with slightly-different time complexities. 
In particular, their constraint functions are assumed to be computationally tractable. However,  
it is not always true; for instance, evaluating routing restriction is generally NP-Hard. 
Until now, there have not been any algorithmic frameworks and analytical machinery that can achieve a constant-factor or bi-criterion approximation for the nonmonotone submodular maximization with an intractable routing constraint. 

\section{Preliminaries and Problem Statement}\label{sec:ModelProblem}

In this section, we first introduce some preliminaries as well as notations, and then, formalize our problem along with two motivating examples. Finally we analyze the intractability of our problem. 

\subsection{Submodular Maximization}

Given the ground set $\Omega$ of $n$ items, we say that a utility function $f: 2^\Omega\rightarrow\mathbf{R}$ is a \emph{submodular set function} if and only if we have 
\begin{equation}\label{eqn:submodularity1}
f(A\cup B) + f(A\cap B) \leq f(A) + f(B)
\end{equation}

\noindent for any $A\subseteq\Omega$ and $B\subseteq\Omega$~\cite{Fujishige-2005-submodular-bk}. Commonly, $f$ is nonnegative and $f(\emptyset)=0$. The submodularity of $f$ can be equivalently expressed with 
\begin{equation}\label{eqn:submodularity2}
f(A\cup\{x\}) - f(A) \geq f(B\cup\{x\}) - f(B)
\end{equation}

\noindent for any $A\subseteq B\subset \Omega$ and $x\in \Omega\bs B$. Submodular function $f(A)$ has a diminishing marginal gain as $A$ grows. We hereafter use $f^+_A(x) \triangleq  f(A\cup\{x\}) - f(A)$ to denote the marginal gain of $f$ on $x\in\Omega$ with respect to $A\subset\Omega$, unless otherwise noted. 
A submodular function $f$ defined on $2^\Omega$ is monotone if for all $A\subseteq B\subseteq \Omega$, we always have $f(A)\leq f(B)$ or a nonnegative marginal gain. 
In a nonmonotone submodular case,  however, $f^+_A(x)$ could be negative.

A constrained submodular maximization problem can be defined by $\max\{f(S)|g(S)\leq c,\,S\subseteq\Omega\}$, where $f$ is the objective function, and $g$, the cost function. In practice, cost functions are usually positive and monotone increasing. 

\subsection{Routing-based Cost Function}

Given a ground set $\Omega$ of items, a weighted graph abstraction can be drawn over $\Omega$---the graph vertices are the items, and an edge exists between two items if both of them have a connection in some way. In such a graph, vertices (items) and edges are all associated with positive weights to represent the costs of visiting vertices and passing through edges. 

With a nonnegative function $\rho: 2^\Omega\rightarrow\bf{R}_{\geq 0}$ and any $S\subseteq\Omega$, if $\rho(S)$ is mapped to the least total cost of the route that visits all the vertices of $S$ in some way, we say $\rho$ is a \emph{routing-based cost function}.  
More specifically, the routing-based cost function on $S$ can be commonly written as $\rho(S)=\sum_{s\in S}c_s+r^*(S)$, where $c_s\geq 0$ is the cost in visiting $s$ and $r^*(S)$ is the minimum cost of traversing each vertex of $S$ at least once. In general, $r^*(S)$ is infeasible to compute within poly-time; clearly, it is the $r^*(S)$ term that results in the NP-hardness of determining $\rho(S)$.

To simplify notation, we also use $\rho$ to represent the optimal algorithm of achieving the least cost. In most real-life applications, we have $\rho(T)\leq\rho(S)$ for $T\subseteq S$, i.e., $\rho$ is nondecreasing. Specifically, it is easy to prove that for any $S, T\subseteq\Omega$ with $S=T\cup\{s\}$, we have $\rho(S)-\rho(T)\geq \cmin$ where $\cmin=\min\{c_s|s\in\Omega\}$, i.e., the gradient of function $\rho$ is at least $\cmin$. In this paper we assume $\rho(\emptyset)=0$ and $\rho$ is nondecreasing. 

\subsection{Problem Definition and Examples}\label{sub:problem}

Given a nonempty ground set $\Omega$ and a nonmonotone submodular set function $f: 2^\Omega\rightarrow\mathbf{R}_{\geq 0}$, we aim at finding a subset $S^*\subseteq\Omega$ such that 
\begin{equation}
S^* = \arg\max\limits_{S\subseteq\Omega}\{f(S)~|~\rho(S)\leq\bf{c}\}\,,
\end{equation}

\noindent where $\rho(S)$ is a routing-based cost function defined on $2^\Omega$ and $\bf{c}$ is the budget (a positive constant). Here, $\rho(\cdot)\leq\bf{c}$ is a \emph{routing constraint} to the objective function $f(\cdot)$. We assume that $\rho(\{s\})\leq\bgt$ for any singleton $s\in\Omega$. 
Followed are two motivating examples of our problem. 

\textbf{Example-1: Diversified Task Offloading in Big Data}. Big Data system is a networked distributed computing environment, in which each processor can store and process data, and processors are connected with delaying links. To effectively complete a computation task within a given time, users usually offload the task down to a subset of processors whose data is more representative. These selected processors can, in parallel, run the task on their local data sets, and return results to users for further combination. 
In machine learning~\cite{Lin-2011-diversity,Mirzasoleiman-2016-submodular,Mirzasoleiman-2018-nonmonotone-submodular-app}, the functions of evaluating data representativeness are commonly nonmonotone and submodular (see Appendix \ref{appdxB:similarity}). 
The user can calculate (foreknow) the time of each processor running a task before offloading it. Given a subset of processors, therefore, the completion time heavily depends on how long the in-network data transfer will take. Equivalently, what the user needs is find a delay-minimum multicast routing tree that is rooted at the user and spans all the selected processors, which is an NP-Complete task in general. Obviously, this example can be modeled as a nonmonotone submodular maximization with a multicast routing restriction. 

\textbf{Example-2: Informative Data Collection with a Robot}.  In this example, an on-ground or aerial robot departs from a depot and moves through some points of interest (already specified in a certain monitoring area) to capture data. The robot's objective is collect as informative data as possible, while not running out of its energy before returning to the depot. This scenario often falls into the category of informative path planning in robotics or social networks. The robot cannot visit all points due to its limited energy capacity. It has to select an informative subset of points, and visit them along a TSP tour that starts and ends at the depot. In real-life applications~\cite{Sharma-2015-entropy,Bachman-2019-mi,Tschannen-2020-representation}, ``informative'' can be measured with the mutual information between the visited and unvisited points; and mutual information is of nonmonotone submodularity. So, this example can be naturally modeled as a nonmonotone submodular maximization with TSP constraint.

\subsection{Hardness of Our Problem}

The computational intractability of our problem stems from three aspects. First, even if we ignore the constraint on our problem (for example, $\bf{c}$ is very large), maximizing a (nonmonotone) submodular function remains NP-hard. It is quite straightforward to know that our problem is NP-hard, too. Second, for a given subset of $\Omega$, we need to evaluate its feasibility by solving the routing-based cost function that is NP-hard in general settings. Third, our problem's objective function and constraint evaluation cannot be  solved separately, because each is coupled with the other.

There have been lots of efforts to optimize monotone and nonmonotone submodular functions without constraints. In the past few years, however, there has been a surge of interest in applying constrained submodular  maximization to machine learning. But most of these prior works usually focus on the maximization with computationally tractable constraints, such as cardinality, knapsack, and matroid constraints~\cite{Lee-2009-submodular,Iyer-2013-submodular,Buchbinder-2014-nonmonotone-submodular}. They, however, cannot be directly applied to our problem to achieve bounded performance.

\section{Our Algorithm}\label{sec:Designs}

In the seminal work~\cite{AAGK-2010-nonsubmodular}, the authors propose an iterative greedy algorithm (called SMS, for convenience) to maximize nonmonotone submodular function with $k$-system constraints. SMS can provide an approximation ratio of $\frac{k}{(1+1/\alpha)(k+1)^2}\approx \frac{1}{(1+1/\alpha)k}$, which remains competitive for general settings until now. Here, parameter $\alpha~(0<\alpha<1)$ is the approximation guarantee for unconstrained nonmonotone submodular maximization. 
Each iteration of SMS covers two stages: (1) selecting as many items from available ones as possible, until the $k$-system property cannot hold, and (2) applying an unconstrained nonmonotone submodular maximization algorithm to those selected items. SMS repeats the two stages $k$ times. During iterations, it stores all the feasible solutions determined by the second stage, and the best one will be returned at last. 

Our algorithm follows the algorithmic framework of SMS, and it can achieve at least the same performance as SMS's, while only needing polynomial time cost. Our basic idea is as follows. First, we prove the routing constraint on our problem is a kind of $k$-system, which transforms our problem into a $k$-system-constrained nonmonotone submodular maximization problem. This enables SMS's approximation ratio to work for our algorithm. Second, we recruit an algorithm $\rf$ to approximate $\rho$ in evaluating the routing cost, such that our algorithm can terminate in poly-time with an over-budget. Third, we prove that the amount of over-budget is bounded, by analyzing the performance gap in cost evaluation between $\rf$ and $\rho$. 

\subsection{Algorithm Description}

Our algorithm is outlined in Algorithm \ref{algo}, and it is an iterative procedure. Each iteration includes two stages which are described below in detail.

\let\oldnl\nl 
\newcommand{\nonl}{\renewcommand{\nl}{\let\nl\oldnl}}
\SetAlgorithmName{Algorithm}{}{}
\begin{algorithm}
\caption{our algorithm}\label{algo}
\DontPrintSemicolon
\SetKwInOut{Input}{input}\SetKwInOut{Output}{output}
\SetKwFor{While}{while}{}{end}
\SetKwFor{For}{for}{}{end}
\Input{$\Omega$, $\bf{c}$, $k$, $\theta$, $\alpha$}
\Output{$T^*\subseteq\Omega$ and $f(T^*)$}
\BlankLine
$S_1\leftarrow\Omega$ and $\mathcal{T}\leftarrow\emptyset$ {~~{\tcp*[h]{two sets}}}\;
\For{$i=1$ \emph{up to} $k$}{ \label{algoline:forloop}
  	$X_i \leftarrow \emptyset$ and $Y_i\leftarrow \emptyset$ {~~{\tcp*[h]{two sequences}}}\;
	{\nonl {\color{blue}{$\triangleright$ Stage 1: add as many items as possible}} }\;
	\While{$S_i\neq\emptyset$}{ \label{algoline:whileloop} 
	  	$s^* \leftarrow\arg\max\{f^+_{X_i}(s)|s\in S_i\}$ \label{algo:cherryPick}\;
		\eIf{$s^*$~\emph{exists and}~$\rf(X_i+s^*)\leq(1+\theta)\bgt$}{\label{algo:rtEvaluation}
			$X_i \leftarrow X_i+s^*$ and $S_i\leftarrow S_i\bs s^*$ \;
			\If{$\rf(X_i)>\bgt$}{\label{algo:beyondBudget}
				$Y_i\leftarrow Y_i + s^*$ \; \label{algo:Y}
			}
		}{
			break the while-loop (line \ref{algoline:whileloop})\;
		}
	}
	\BlankLine
	{\nonl {\color{blue}{$\triangleright$ Stage 2: find the best from a set of local optima}} }\;
	\eIf{$X_i\neq\emptyset$}{
		$T^*_i \leftarrow \unst(X_i)$ \label{algo:xISopt}\;
		\While{$y\leftarrow$~\emph{the last item of}~$Y_i$~\emph{exists}}{\label{algo:last}
			$X_i\leftarrow X_i\bs y$ and $Y_i\leftarrow Y_i\bs y$ \label{algo:curtailXi1}\;
		    $T_y \leftarrow \unst(X_i)$ and $\mathcal{T}\leftarrow\mathcal{T}\cup\{T_y\}$ \label{algo:curtailXi2}\;
                
		}
		$\mathcal{T}\leftarrow\mathcal{T}\cup\{T^*_i\}$ \;
	}{
		break the for-loop (line \ref{algoline:forloop})\;
	}
	$S_{i+1} \leftarrow S_i$ {~~{\tcp*[h]{the $i$-th iteration stops here}}}\; 
}
\Return $f(T^*)=\max\{f(T)|T\in\mathcal{T}\}$ and $T^*$\;
\end{algorithm}

\subsubsection{Stage-1: Maximization with Constraint Relaxed}

In the $i$-th iteration, the items, yet unexamined so far, form the set $S_i$. Stage-1 uses the \while iteratively to examine all the items of $S_i$, picking out $s^*$, the item bringing the highest marginal gain. During the cherry-picking, our algorithm uses a $(1+\theta)$-approximation poly-time algorithm $\rf$ in place of optimal algorithm $\rho$ to verify whether the routing cost is under constraint, where $\theta$ is usually within $(0,1)$.  

In the \while, the constraint is relaxed from $\bgt$ to $(1+\theta)\bgt$, and the \while continues until the routing cost evaluated by $\rf$ is beyond $(1+\theta)\bf{c}$. If the cost stays under the relaxed budget, $s^*$ will be moved from $S_i$ into the rear of $X_i$, which is an ordered set or a sequence; otherwise, the \while breaks, and then, Stage-1 is over in the current iteration and outputs $X_i$ and $Y_i$. As shown on lines \ref{algo:beyondBudget} and \ref {algo:Y}, $Y_i$ is also a sequence and only stores the items that are examined after the original budget $\bgt$ is violated. 
The \while tries to maximize $f$ with its original constraint slightly relaxed, and therefore, $\rf(X_i)$ output by Stage-1 may be larger than $\bgt$. But the over-budget can remain under $\theta\cdot\bgt$.

\subsubsection{Stage-2: Maximization without Constraints}

Once Stage-1 stops with $X_i\neq\emptyset$ as output, Stage-2 is prompt to take over the unfinished task. 
Stage-2 tries to employ the \unst procedure to dig out a more profitable subset out of $X_i$. Here, \unst can be any solvers that maximize a general unconstrained nonmonotone submodular function, with a bounded approximation ratio $\alpha$ ($0<\alpha<1$). In this paper, \unst is fulfilled by the deterministic algorithm proposed in \cite{Buchbinder-2012-nonmonotone-submodular}, which can use $\mathcal{O}(n)$ value oracles to achieve $1/3$-approximation for unconstrained nonmonotone submodular maximization problem. Appendix \ref{appdxA:deterministicUSM} shows the implementation. 

Stage-2 first uses \unst to pick out $T^*_i$ from $X_i$. However, $T^*_i$ is not necessarily the best because of $f$'s non-monotonicity. 
As shown on lines \ref{algo:curtailXi1} and \ref{algo:curtailXi2}, therefore, Stage-2 continues to curtail $X_i$ until $X_i\bs Y_i$ is reached, and in the meantime, it repeatedly recruits \unst to explore for the solution $T_y$ that is possibly better than $T^*_i$. 

After the \for completes at $i=k$ or $X_i=\emptyset$,  our algorithm will immediately scan through all the routes in $\mathcal{T}$ and then return the ultimate winner and its objective value. 
The \for goes ahead as $i$ increases from 1 to $k$. Here, $k$ is an integer constant, loosely upper-bounded by $|\Omega|-1$, and it will be discussed later in section \ref{subsec:timeComplexity}.

\subsection{Theoretical Performance}\label{subsec:approx}

Given a  set function $f: 2^\Omega\rightarrow\mathbf{R}_{\geq 0}$, we say $(\Omega, \mathcal{I}\subseteq 2^\Omega)$ is an \emph{independence system} if the following two properties are satisfied: I (non-emptiness) $\emptyset\in\mathcal{I}$, i.e., $\mathcal{I}$ is not empty per se, and II (heredity) if $S_1\in\mathcal{I}$ and $S_2\subset S_1$, then $S_2\in\mathcal{I}$.
The heredity of independence system makes $\dsys$ exponential in size in the worst case, and therefore, the optimization via exhaustive search over $\dsys$ is infeasible in terms of computation. 

Given $S\subseteq\Omega$ and $S_1\subseteq S$, we say $S_1$ is a {\it base} of $S$, if $S_1\in\dsys$ and we cannot find other $S_2\in \dsys$ such that $S_1\subseteq S_2\subseteq S$. 
An independence system $(\Omega, \dsys)$ is called a $k$-{\it independence system} ($k$-{\it system}, in short), if and only if there exists an integer $k$ such that ${|B_1|}/{|B_2|}\leq k$ for any two bases, $B_1$ and $B_2$, of $\Omega$. As a special case, a matroid is a 1-independence system. 

\begin{theorem}\label{thr:kSystem}
	\emph{
		Let $\dsys_\rho=\{S|S\subseteq\Omega~\mbox{and}~\rho(S) \leq \bf{c}\}$ and 
		$\dsys_\rf=\{S|S\subseteq\Omega~\mbox{and}~\rf(S)\leq (1+\theta)\mathbf{c}~
		\mbox{with}~\theta>0\}$. Both $(\Omega, \dsys_\rho)$ and $(\Omega, \dsys_\rf)$ 
		are a kind of $k$-system. 
	}
\end{theorem}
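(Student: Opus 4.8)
The plan is to verify the two defining properties of a $k$-system for each of $(\Omega,\dsys_\rho)$ and $(\Omega,\dsys_\rf)$ separately, though the argument is essentially the same for both since $\rf$ is itself a legitimate routing-type cost function (or at least inherits the monotonicity and minimum-gradient properties that drive the proof). First I would check that each pair is an \emph{independence system}. Non-emptiness: since $\rho(\emptyset)=0\leq\bgt$ (and likewise $\rf(\emptyset)=0\leq(1+\theta)\bgt$), we have $\emptyset\in\dsys_\rho$ and $\emptyset\in\dsys_\rf$. Heredity: if $S_1\in\dsys_\rho$ and $S_2\subseteq S_1$, then by the assumed monotonicity of $\rho$ we get $\rho(S_2)\leq\rho(S_1)\leq\bgt$, so $S_2\in\dsys_\rho$; the identical argument with the relaxed budget handles $\dsys_\rf$. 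This uses only the standing assumption that the routing cost functions are nondecreasing.

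The substantive part is bounding the ratio $|B_1|/|B_2|$ over all pairs of bases $B_1,B_2$ of $\Omega$. The key quantitative facts available are: (i) every singleton is feasible, i.e. $\rho(\{s\})\leq\bgt$ for all $s\in\Omega$ (and $\rf(\{s\})\leq(1+\theta)\bgt$), so no base can be empty and in fact every maximal feasible set has size at least $1$; and (ii) the gradient bound $\rho(S)-\rho(T)\geq\cmin$ whenever $S=T\cup\{s\}$, which forces $\rho(S)\geq |S|\cdot\cmin$ for any $S$, hence any feasible $S$ satisfies $|S|\leq\bgt/\cmin$. Combining these, for any two bases $B_1,B_2$ we have $1\leq|B_2|$ and $|B_1|\leq\lfloor\bgt/\cmin\rfloor$, so $|B_1|/|B_2|\leq\lfloor\bgt/\cmin\rfloor$. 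Thus $(\Omega,\dsys_\rho)$ is a $k$-system with $k=\lfloor\bgt/\cmin\rfloor$, and since this quantity is trivially at most $|\Omega|$ (no feasible set exceeds the ground set), we recover the claimed loose bound $k\leq|\Omega|-1$ after accounting for the fact that a base of size $|\Omega|$ would require the whole ground set to be feasible. For $\dsys_\rf$ the same computation with budget $(1+\theta)\bgt$ gives $k=\lfloor(1+\theta)\bgt/\cmin\rfloor$; one should state the $k$ for $\dsys_\rf$ with the relaxed budget, or simply take the larger of the two values so that a single $k$ works for both systems as the theorem asserts.

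The main obstacle — really the only delicate point — is making sure the gradient property $\rho(S)-\rho(T)\geq\cmin$ actually applies to $\rf$, since $\rf$ is only assumed to be a $(1+\theta)$-approximation of $\rho$ and its internal structure is not specified. If $\rf$ is not guaranteed to be monotone, the heredity step for $\dsys_\rf$ breaks; if $\rf$ has no per-item lower bound, the size bound breaks. I would handle this by invoking the (reasonable, and implicitly assumed) property that $\rf$ is a monotone routing-cost approximation satisfying $\rf(S)\geq\rho(S)\geq|S|\cdot\cmin$, or alternatively by deriving $|S|\leq(1+\theta)\bgt/\cmin$ directly from $\rf(S)\geq\rho(S)$ together with the gradient bound for the \emph{exact} $\rho$ — this only needs $\rf$ to over-estimate $\rho$, which a $(1+\theta)$-approximation of a minimization problem does by definition. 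Once that point is pinned down, everything else is the routine two-line verification above, and the uniform $k=\max\{\lfloor\bgt/\cmin\rfloor,\ \lfloor(1+\theta)\bgt/\cmin\rfloor\}=\lfloor(1+\theta)\bgt/\cmin\rfloor$ certifies both systems simultaneously.
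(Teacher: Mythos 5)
Your proposal is correct, and its skeleton matches the paper's: first verify non-emptiness and heredity (using $\rho(\emptyset)=0$ and the monotonicity of the cost function), then bound the ratio of base sizes by combining a lower bound of $1$ on every base (from the feasibility of singletons) with an upper bound on the size of any feasible set. Where you genuinely diverge is in the second step: the paper contents itself with asserting that some integer $k_\rho$ with $1\leq k_\rho<n$ must exist (its upper bound on base size is just $|\Omega|$, padded with an aside about bases of size $n$ that does no real work), whereas you extract the explicit bound $\rho(S)\geq|S|\cdot\cmin$ from the stated gradient property and conclude $k=\lfloor\bgt/\cmin\rfloor$ (resp. $\lfloor(1+\theta)\bgt/\cmin\rfloor$ for $\dsys_\rf$). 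This is a strict improvement: it ties $k$ to the problem data, which matters downstream since the final approximation ratio $\frac{k}{4(k+1)^2}$ degrades with $k$, and it is obtained essentially for free. You are also more honest than the paper about the one real gap in both arguments: heredity of $\dsys_\rf$ needs $\rf$ to be monotone, which does not follow from $\rf$ being a $(1+\theta)$-approximation of $\rho$ (the chain $\rf(S_2)\leq(1+\theta)\rho(S_2)\leq(1+\theta)\rho(S_1)\leq(1+\theta)\rf(S_1)$ only yields $\rf(S_2)\leq(1+\theta)^2\bgt$, not $(1+\theta)\bgt$). The paper silently waves this away with ``applying similar analysis''; your fix via $\rho(S)\leq\rf(S)$ handles the size bound but, as you note, monotonicity of $\rf$ must still be assumed for heredity. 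One small imprecision: $\lfloor\bgt/\cmin\rfloor$ need not be at most $|\Omega|$; the clean statement is $k=\min\{\lfloor\bgt/\cmin\rfloor,\,|\Omega|\}$, since both quantities bound the size of any feasible set.
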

\begin{proof}
We first prove that $\dsys_\rho$ is an independence system. 	Since $\emptyset\subset\Omega$ and $\rho(\emptyset)=0$, we have $\emptyset\in\dsys_\rho$. Consider $S\in\dsys_\rho$ and a subset $T\subset S$. We know $\rho(T)<\rho(S)\leq\mathbf{c}$, which means $T\in\dsys_\rho$. Thus $(\Omega, \dsys_\rho)$ is an independence system. 

For any item $s\in\Omega$, we have assumed $\rho(\{s\})\leq\bgt$, indicating that each base of $\dsys_\rho$ is not less than one in size. On the other hand, it is easy to prove, by contradiction, that if $\Omega$ has a base of size $n$, then all its bases are of size $n$. 
That said, there must exist an integer $k_\rho$ ($1\leq k_\rho<n$) such that ${|B_1|}/{|B_2|}\leq k$ for any two bases, $B_1$ and $B_2$, of $\Omega$. Hence $\dsys_\rho$ is a $k_\rho$-system.
Applying similar analysis to $(\Omega, \dsys_\rf)$, we can conclude that there exists an integer $k_\rf$ ($1\leq k_\rf<n$) such that $(\Omega, \dsys_\rf)$ is a $k_\rf$-system.
\end{proof}

Theorem \ref{thr:kSystem} implies that our problem is also a nonmonotone submodular maximization problem with $k$-system constraint. That said, SMS's approximation ratio is hopefully true for our algorithm. 
However, SMS requires that its $k$-system cost function be exactly evaluated or solved. Such a requirement cannot be satisfied in our setting, because our cost function is more complex and generally intractable. In exchange for polynomial computation complexity, our algorithm replaces $\rho$ with $\rf$.
Recall that $\rf$ is a $(1+\theta)$-approximation to $\rho$. Now, the crux of analyzing our algorithm's competitiveness is to profile how differently Stage-1 will perform when it uses $\rf$ under a relaxed budget and $\rho$ under the original budget. 

Without loss of generality, we focus on Stage-1 of the $i$-th iteration of our algorithm. Suppose that we can, in parallel, run $\rho$ and $\rf$ in the cost evaluation (line \ref{algo:rtEvaluation} of Algorithm \ref{algo}), with $\bf{c}$ and $(1+\theta)\bf{c}$ as their budgets, respectively. For convenience, Stage-1's \while is called $\rho$-\while if $\rho$ is used in cost evaluation, and it is called $\rf$-\while if $\rf$ is used.

\begin{theorem}\label{thr:stop1}
\emph{
Given $S_i$ at the beginning of Stage-1, assume the $\rf$-\while stops with $X_i^\rf\subseteq S_i$. If $\rho$-\while can stop with $X_i^\rho\subseteq S_i$, then we always have $X_i^\rho\subseteq X_i^\rf$ for any $\theta>0$.
}
\end{theorem}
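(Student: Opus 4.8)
The plan is to run the $\rho$-\while and the $\rf$-\while in lockstep from their common initial state ($X_i=\emptyset$ together with the given $S_i$) and to show, by induction on the number of items inserted into $X_i$, that the sequence of items added by the $\rho$-\while is an initial segment of the sequence added by the $\rf$-\while. The first ingredient is that the greedy choice on line~\ref{algo:cherryPick}, $s^*\leftarrow\arg\max\{f^+_{X_i}(s)\mid s\in S_i\}$, depends only on the current pair $(X_i,S_i)$ and not on which cost oracle is consulted; and the only items ever deleted from $S_i$ are those actually accepted, since a rejected candidate causes the \while at line~\ref{algoline:whileloop} to break. Hence two runs that have inserted exactly the same items so far are in identical states $(X_i,S_i)$ and therefore propose the same $s^*$. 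For ``the same $s^*$'' to be meaningful I would fix, once and for all, a deterministic tie-breaking rule (say, by item index) for the $\arg\max$; this is the only genuinely delicate point, and it is a standard convention.

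The second ingredient is the defining property of $\rf$ as a $(1+\theta)$-approximation of the minimum-cost routing problem solved by $\rho$, namely $\rho(S)\le\rf(S)\le(1+\theta)\rho(S)$ for every $S\subseteq\Omega$. The direction we need is a monotonicity of acceptance: if, in a common state with current set $X_i$, the $\rho$-\while accepts $s^*$, i.e. $\rho(X_i+s^*)\le\bgt$, then $\rf(X_i+s^*)\le(1+\theta)\rho(X_i+s^*)\le(1+\theta)\bgt$, so the $\rf$-\while — sitting in the same state, with the same $S_i$, so that $s^*$ indeed exists — also accepts $s^*$. The base case is handled by the standing assumption $\rho(\{s\})\le\bgt$ for every singleton $s$, which forces the $\rho$-\while to accept its very first candidate; and if instead $X_i^\rho=\emptyset$ the claimed inclusion is trivial.

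Putting these together: assume inductively that after $t$ steps both whileloops have inserted the same items $s_1,\dots,s_t$, with $t$ strictly less than $m:=|X_i^\rho|$. Both are then in state $X_i=\{s_1,\dots,s_t\}$ with the same remaining $S_i$; they propose the same $s_{t+1}$; the $\rho$-\while accepts it because $t+1\le m$; hence the $\rf$-\while accepts it as well and has not yet terminated. Iterating up to $t=m$ shows that the $\rf$-\while inserts at least $s_1,\dots,s_m$, so $X_i^\rho=\{s_1,\dots,s_m\}\subseteq X_i^\rf$, which is the assertion. I expect the main obstacle to be expository rather than mathematical: making the ``lockstep'' coupling of the two runs precise and pinning down the tie-breaking convention, after which the induction is routine and is powered entirely by the inequality $\rf\le(1+\theta)\rho$ together with singleton feasibility.
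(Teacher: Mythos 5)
Your proposal is correct and is essentially the paper's own argument: both rest on the observation that the greedy candidate sequence on line~\ref{algo:cherryPick} is independent of which cost oracle is used, combined with the inequality $\rf(\cdot)\leq(1+\theta)\rho(\cdot)$ to transfer acceptance from the $\rho$-\while (budget $\bgt$) to the $\rf$-\while (budget $(1+\theta)\bgt$). The only cosmetic difference is that you run the inductive step in the direct direction (``if $\rho$ accepts then $\rf$ accepts'') whereas the paper states the contrapositive (``if $\rf$ stops at $s_k$ then $\rho$ cannot reach $s_{k+1}$''), and your explicit treatment of the lockstep coupling and tie-breaking is a welcome tightening of what the paper leaves implicit.
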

\begin{proof}
In line \ref{algo:cherryPick} of our algorithm, picking $s^*$ (i.e., the item with the highest marginal gain) out of $S_i$ has nothing to do with the solver used by the cost evaluation (line \ref{algo:rtEvaluation}). Without considering the routing constraint, the sequence of best vertices chosen in the \while  would be always the same. We denote this sequence by $\langle s_{1}, s_{2},\cdots s_{m}\rangle$, where $m=|S_i|$ and $s_{j}$ is picked ahead of $s_{k}$ if $j<k$.

Consider a general case: $\rf$-\while stops at $s_{k}~(k< m)$, while returning a route of cost $\rf_k$. As shown in Fig. \ref{fig:stopCondition}, in such a case, we have $\rf_{k}\leq (1+\theta)\bgt$~but~$\rf_{k+1}> (1+\theta)\bgt$. 
Now suppose that optimal solver $\rho$ can stop at $s_{k+1}$ with the best route of cost $\rho_{k+1}$,  meaning $\rho_{k+1}\leq\bgt$.

Since $\rf$ approximates $\rho$ with a factor of $(1+\theta)$, we readily have  
\begin{equation}\label{eqn:contradiction1}
	(1+\theta)\rho_{k+1}\geq \rf_{k+1} > (1+\theta)\bgt\,,
\end{equation}

\noindent which means $\rho_{k+1}>\bgt$. That contradicts. Hence,  $\rho$ cannot stop at $s_{k+1}$ under budget $\bgt$, if $\rf$ stops at $s_{k}$ under budget $(1+\theta)\bgt$. Clearly, $\rho$ cannot stop at any $s_j$ for $k+2\leq j\leq m$, too.  We thus have $X_i^\rho\subseteq X_i^\rf\subseteq S_i$ for any $\theta>0$.
\end{proof}

\begin{figure}\centering
\includegraphics[width=.6\textwidth]{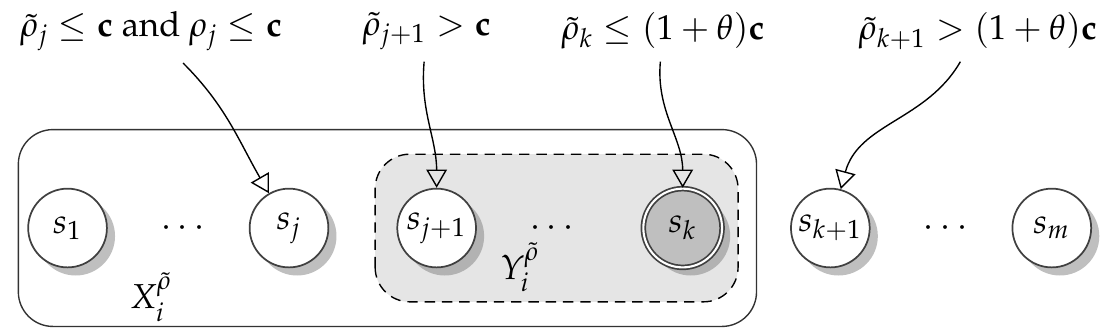}
\caption{Demo of our algorithm's Stage-1 selecting $X_i^\rf$ out of $S_i=\{s_1, s_2\ldots s_m\}$, where $j<k\leq m$.}
\label{fig:stopCondition}
\end{figure}

Theorem \ref{thr:stop1} shows that, in Stage-1, $\rho$ cannot pick more items out of $S_i$ under the budget of $\bgt$ than $\rf$ does under the budget of $(1+\theta)\bgt$. 

\begin{theorem}\label{thr:feasibleSolution}
\emph{
Given $S_i$ at the beginning of Stage-1, if $\rho$-\while stops with $X_i^\rho$, then our algorithm must be able to meet $X_i^\rho$ in the subsequent Stage-2, and determine a subset of $X_i^\rho$ as a candidate solution.
}
\end{theorem}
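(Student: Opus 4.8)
The plan is to deduce the statement from Theorem~\ref{thr:stop1} together with the observation that Stage-2 of iteration $i$ applies $\unst$ to a consecutive block of prefixes of the greedy chain built in Stage-1, a block that necessarily contains $X_i^\rho$. As in the proof of Theorem~\ref{thr:stop1}, fix the greedy order $\langle s_1,s_2,\ldots,s_m\rangle$ (with $m=|S_i|$) in which line~\ref{algo:cherryPick} would examine the items of $S_i$; this order depends only on $f$, not on whether $\rho$ or $\rf$ is used in the feasibility test of line~\ref{algo:rtEvaluation}, so both $\rho$-\while and $\rf$-\while output prefixes of it. Write $X_i^\rf=\{s_1,\ldots,s_k\}$ and $X_i^\rho=\{s_1,\ldots,s_q\}$; Theorem~\ref{thr:stop1} gives $q\le k$. (If $S_i=\emptyset$ the claim is vacuous; otherwise the standing assumption $\rho(\{s\})\le\bgt$ for every singleton forces $X_i^\rho\neq\emptyset$, hence $X_i^\rf\neq\emptyset$, so Stage-2 of iteration $i$ is genuinely executed.)

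Next I would locate $Y_i$ inside this chain. Since $\rf$, like $\rho$, is nondecreasing, there is a threshold index $p$ with $\rf(\{s_1,\ldots,s_{p-1}\})\le\bgt<\rf(\{s_1,\ldots,s_p\})$, so line~\ref{algo:Y} appends exactly $s_p,\ldots,s_k$ to $Y_i$; thus $Y_i=\langle s_p,\ldots,s_k\rangle$ and $X_i^\rf\bs Y_i=\{s_1,\ldots,s_{p-1}\}$. The one inclusion that needs an argument is $X_i^\rf\bs Y_i\subseteq X_i^\rho$: because the $(1+\theta)$-approximation never underestimates the optimal route cost, $\rho(\{s_1,\ldots,s_{p-1}\})\le\rf(\{s_1,\ldots,s_{p-1}\})\le\bgt$, and monotonicity of $\rho$ makes every prefix of it $\rho$-feasible as well, so $\rho$-\while cannot halt before it has admitted $s_{p-1}$; hence $q\ge p-1$. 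Combining the two bounds, $p-1\le q\le k$.

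Finally I would cash this out against the Stage-2 bookkeeping. Starting from $X_i=X_i^\rf$, Stage-2 evaluates $\unst$ on $X_i^\rf$ (yielding $T^*_i$) and then, peeling off the rear of $Y_i$ one item at a time (lines~\ref{algo:curtailXi1}--\ref{algo:curtailXi2}), evaluates $\unst$ on $\{s_1,\ldots,s_{k-1}\},\{s_1,\ldots,s_{k-2}\},\ldots,\{s_1,\ldots,s_{p-1}\}$ in turn, adding each resulting set to $\mathcal{T}$. So $\unst$ is applied to the prefix $\{s_1,\ldots,s_j\}$ for every $j$ with $p-1\le j\le k$; since $q$ lies in exactly this range, it is applied to $\{s_1,\ldots,s_q\}=X_i^\rho$, and the set $\unst(X_i^\rho)\subseteq X_i^\rho$ thereby enters $\mathcal{T}$ as a candidate solution (it equals $T^*_i$ when $q=k$ and is one of the $T_y$ otherwise), which is exactly the assertion.

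The argument is essentially bookkeeping once Theorem~\ref{thr:stop1} is available; the step I expect to need the most care is the inclusion $X_i^\rf\bs Y_i\subseteq X_i^\rho$, which leans on two facts that should be stated explicitly: that $\rf$ reports a genuine route cost, hence $\rf\ge\rho$ pointwise, and that monotonicity of $\rho$ propagates the feasibility of $\{s_1,\ldots,s_{p-1}\}$ to all of its prefixes. A secondary point to pin down is that $Y_i$ really is a rear segment of the greedy chain, for which it suffices that $\rf$ be nondecreasing along that chain; if monotonicity of $\rf$ is not already assumed, it should be added (or $\rf$ replaced by its running maximum along the chain).
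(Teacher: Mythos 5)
Your proof is correct and follows essentially the same route as the paper's: identify the index at which the original budget $\bgt$ is first exceeded under $\rf$, conclude that the $\rho$-\while must halt somewhere between that index and the end of $X_i^\rf$ (using $\rho\le\rf$ pointwise together with Theorem~\ref{thr:stop1}), and observe that Stage-2 applies \unst to exactly those prefixes of the greedy chain. Your explicit flagging of the two hidden assumptions --- that $\rf$ never underestimates $\rho$, and that $\rf$ is nondecreasing along the chain so that $Y_i$ really is a rear segment --- is a useful tightening of the paper's more informal version of the same argument.
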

\begin{proof}
We resort to Fig. \ref{fig:stopCondition} to finish this proof. Assume that the $\rf$-\while stops in Stage-1 with $X_i^\rf$ and $Y_i^\rf$, and that $\rf_j\leq\bgt$ but $\rf_{j+1}>\bgt$. This means $Y_i^\rf=\langle s_{j+1}, s_{j+2}\ldots s_k\rangle$, according to our algorithm's lines \ref{algo:beyondBudget} and \ref{algo:Y}. 

Since $\rf$ approximates $\rho$ and $\rf_{j}\leq\bgt$, we have $\rho_j\leq\bgt$, i.e., that $\rho$ will not stop earlier than meeting item $j$. Additionally, we know $X_i^\rho\subseteq X_i^\rf$ by Theorem \ref{thr:stop1}. Thus $\rho$ must stop at some item of $\{s_j\}\cup Y_i^\rf$. 
If $X_i^\rho = X_i^\rf$, obviously, our algorithm can meet $X_i^\rho$ on Stage-2's line \ref{algo:xISopt}. 
Moreover, in Stage-2's line \ref{algo:last}, we continue to shrink $X_i^\rf$~(i.e., $X_i$), by deleting its last item, until $X_i^\rf\bs Y_i^\rf$ is hit. During such a shrinking process, we apply \unst to each of the intermediate subsets. Clearly, $X_i^\rho$ is either among these subsets, or equiverlant to $X_i^\rf$, i.e., 
our algorithm cannot miss $X_i^\rho$. 
\end{proof}

Theorem \ref{thr:stop1} and Theorem \ref{thr:feasibleSolution} have paved the way to prove our algorithm's bicriterion competitiveness. 
Next we introduce the formal definition of \emph{bicriterion approximation} as follows. 
For a problem of $\max\{f(X)|g(X)\leq c\}$, a $(p, q)$-bicriterion approximation algorithm can output a solution $X$ that guarantees $f(X)\geq p\cdot f(X^{\rm opt})$ and $g(X)\leq q\cdot c$, where $X^{\rm opt}$ is an optimal solution, $0<p<1$ and $q>1$. These two parameters $p$ and $q$ approximately measure the optimality and the feasibility of approximation algorithm, respectively.

\begin{theorem}\label{thr:competitive}
\emph{
Using a $(1+\theta)$-approximation algorithm $\rf$ to evaluate the cost constraint under budget $(1+\theta)\bgt$, our algorithm can at least achieve a $\left[\frac{k}{4(k+1)^2}, 1+\theta\right]$-bicriterion approximation.
}
\end{theorem}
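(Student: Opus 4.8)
The plan is to prove the two halves of the bicriterion bound separately. The feasibility factor $q=1+\theta$ is the easy half. The returned $T^*$ is, by construction, a subset of one of the sequences $X_i$ built during Stage-1 of some iteration $i$, since every set put into $\mathcal{T}$ is \unst applied to $X_i$ or to one of its curtailed prefixes, and \unst returns a subset of its argument. Each item is appended to $X_i$ only after the guard $\rf(X_i+s^*)\le(1+\theta)\bgt$ has passed, so at the end of Stage-1 of iteration $i$ we have $\rf(X_i)\le(1+\theta)\bgt$. Since $\rho$ is nondecreasing with $\rho(\emptyset)=0$, and $\rf$ is a $(1+\theta)$-approximation of $\rho$ (so $\rho(Z)\le\rf(Z)$ for every $Z$), it follows that $\rho(T^*)\le\rho(X_i)\le\rf(X_i)\le(1+\theta)\bgt$, which is exactly the claimed $q=1+\theta$.

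For the optimality factor I would first reduce our algorithm to an ideal run that uses the exact solver $\rho$ under the true budget $\bgt$. By Theorem \ref{thr:kSystem}, $(\Omega,\dsys_\rho)$ is a $k$-system, so the problem is an instance of nonmonotone submodular maximization subject to a $k$-system constraint, for which the iterated-greedy-plus-\unst framework of \cite{AAGK-2010-nonsubmodular} (call it the $\rho$-SMS) achieves a $\tfrac{k}{(1+1/\alpha)(k+1)^2}$-approximation, $\alpha$ being the guarantee of the unconstrained sub-solver. Using Theorems \ref{thr:stop1} and \ref{thr:feasibleSolution} I would argue, iteration by iteration, that whatever set $X_i^\rho$ the ideal $\rho$-driven Stage-1 would pull out of the current $S_i$, our Stage-2 actually meets it: $X_i^\rho$ lies between $X_i^\rf\setminus Y_i$ and $X_i^\rf$, hence it is one of the prefixes that the curtailing loop over $Y_i$ visits, and \unst is called on it; therefore $\unst(X_i^\rho)\in\mathcal{T}$ for every $i$, so $f(T^*)=\max_{T\in\mathcal{T}}f(T)\ge\max_{1\le i\le k}f(\unst(X_i^\rho))$. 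In other words, our polynomial-time algorithm is at least as good as the ideal $\rho$-SMS.

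It then remains to carry out (or simply cite) the analysis of \cite{AAGK-2010-nonsubmodular} on the $\rho$-SMS run: fix an optimal feasible set $S^*$; use that each \unst call is an $\alpha$-approximation for unconstrained maximization to get $f(\unst(X_i^\rho))\ge\alpha\,f(X_i^\rho\cap S^*)$; use the $k$-system covering argument to charge $f(S^*)$ against $f\!\left(S^*\cap\bigcup_i X_i^\rho\right)$ and then, by subadditivity of $f$, against $\sum_i f(S^*\cap X_i^\rho)$; and combine these to obtain $\max_i f(\unst(X_i^\rho))\ge\tfrac{\alpha k}{(1+\alpha)(k+1)^2}f(S^*)$. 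Substituting $\alpha=1/3$, the ratio of the deterministic unconstrained maximizer of \cite{Buchbinder-2012-nonmonotone-submodular} that realizes \unst, turns this into $\tfrac{k}{4(k+1)^2}f(S^*)$. Together with the feasibility part, this gives the $\bigl[\tfrac{k}{4(k+1)^2},\,1+\theta\bigr]$-bicriterion guarantee.

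The main obstacle is the reduction step. I must make sure the curtailing loop over $Y_i$ really reconstructs exactly the prefix the ideal $\rho$-run keeps, even though after the first iteration the two runs' residual sets $S_i$ diverge (our algorithm deletes the larger set $X_i^\rf\supseteq X_i^\rho$ and so works over a smaller residual thereafter). This is what the inclusion $X_i^\rho\subseteq X_i^\rf$ of Theorem \ref{thr:stop1} and the ``cannot miss $X_i^\rho$'' conclusion of Theorem \ref{thr:feasibleSolution} are designed to secure, but it has to be verified that these inclusions propagate cleanly through all $k$ iterations and that no $X_i^\rho$ escapes being fed to \unst. A related delicate point is that Stage-1 halts at the first budget violation rather than at full maximality in $\dsys_\rho$, so one must check that the curtailed-prefix evaluations our algorithm performs still supply everything the $k$-system charging argument of \cite{AAGK-2010-nonsubmodular} actually invokes --- equivalently, that the version of SMS we emulate is the one whose stated analysis relies only on the properties the sets $X_i^\rho$ genuinely enjoy.
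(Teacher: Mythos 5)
Your proposal follows essentially the same route as the paper's proof: the feasibility factor $1+\theta$ comes directly from the Stage-1 guard $\rf(X_i+s^*)\le(1+\theta)\bgt$, and the optimality factor comes from combining Theorem \ref{thr:kSystem} (the routing constraint is a $k$-system) with Theorems \ref{thr:stop1} and \ref{thr:feasibleSolution} to argue that the $\rf$-driven run cannot miss the sets an ideal $\rho$-driven SMS run would keep, and then invoking the $\frac{k}{(1+1/\alpha)(k+1)^2}$ ratio of \cite{AAGK-2010-nonsubmodular} with $\alpha=1/3$. The ``main obstacle'' you flag --- checking that the inclusions propagate cleanly across all $k$ iterations despite the diverging residual sets, and that Stage-1's early stopping still supplies what the $k$-system charging argument needs --- is a legitimate delicate point, but the paper's own proof does not resolve it either; it simply asserts that Theorem \ref{thr:feasibleSolution} suffices.
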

\begin{proof}
In each iteration's Stage-1, our algorithm always stops adding items when its budget is violated. So, any routes in $\mathcal{T}$ are not beyond $(1+\theta)\bgt$ in total cost; in other words, the over-budget amount of our solution is at most $\theta\cdot\bgt$.

Theorem \ref{thr:feasibleSolution} ensures that our algorithm with $\rf$ in cost evaluation can still achieve an approximation ratio that is at least as good as SMS's $\frac{k}{(1+1/\alpha)(k+1)^2}$. 
Our algorithm's Stage-2 invokes the \unst procedure, whose approximation ratio is $0<\alpha<1$. It is proven in \cite{Buchbinder-2012-nonmonotone-submodular} that, for general unconstrained submodular maximization problems, there exist a deterministic and a randomized algorithms, which both need linear time and can achieve $1/3$-approximation and $1/2$-approximation (in expectation), respectively. If taking $\alpha=1/3$, our algorithm can achieve an exact approximation ratio of at least $\frac{k}{4(k+1)^2}\approx \frac{1}{4k}$. 
In \cite{Feige-2011-submodular,Buchbinder-2012-nonmonotone-submodular}, two elegant deterministic $1/3$-approximation \unst algorithms are presented. We in this paper use the \unst algorithm proposed in \cite{Buchbinder-2012-nonmonotone-submodular}, which is given in Appendix \ref{appdxA:deterministicUSM}. 
\end{proof}

\subsection{Asymptotic Time Complexity}\label{subsec:timeComplexity}

Without loss of generality, we first analyze the asymptotic time complexity in the $i$-th iteration with $S_i$ at the beginning. 

\textbf{Time cost in Stage-1}. The \while of Stage-1 will stop, if $S_i$ turns empty or if the relaxed constraint is violated; consequently, Stage-1 examines at most $|S_i|$ items. 
In Stage-1, determining all possible $s^*$ on line \ref{algo:cherryPick} needs $\Theta(|S_i|^2)$ calls of $f$-oracles in total.
In addition, it is easy to know that Stage-1 needs $\Theta(|S_i|)$ calls of $\rf$-oracles in constraint evaluation.

\textbf{Time cost in Stage-2}. In Stage-2, \unst is called  $|Y_i|$ times, and every call for \unst needs $\mathcal{O}(|X_i|)$ $f$-oracles. 
With $Y_i\subseteq X_i\subseteq S_i$, therefore, Stage-2's time cost can be loosely upper bounded by $\mathcal{O}(|S_i|^2)$ calls of $f$-oracle.

Besides involving the two-stage \for as the protagonist, our algorithm in its last step sorts through the collection $\mathcal{T}$ to return the best solution, and clearly, this process needs $\Theta(|\mathcal{T}|)$ time. Stage-2 of iteration $i$ adds into $\mathcal{T}$ at most $|Y_i|$ local optima; thus, the size of $\mathcal{T}$ is far less than $k|\Omega|$. 

To sum up, the time complexity of our algorithm is upper bounded by $\mathcal{O}(kn^2)$ $f$-oracles plus $\mathcal{O}(kn)$ $\rf$-oracles, where $n=|\Omega|$. We will in future give each stage a tighter analysis that can further reduce this time complexity. 

For the settings with $k$-system constraint, we cannot easily know $k$ in advance because determining $k$ often requires exponentially exhaustive computation. In literature, $k$ is usually replaced by the size of $\Omega$. Although doing so will not worsen the theoretical time complexity too much, less $k$ is preferable in practice. 
In \cite{AAGK-2010-nonsubmodular}, the authors recommend that $k$ can be set to two---running their SMS's iteration only twice instead of $|\Omega|$ times---in order to considerably reduce the run-time. 
In \cite{Feldman-2017-nonmonotone-submodular}, the authors find that, in the SMS-like algorithms, reducing $k$ down to $\sqrt{k}$ will not impact the algorithm approximation performance; that is,  in our algorithm's implementation, $k$ can be safely replaced by $\sqrt{|\Omega|}$, thereby leading to a significant reduction in computational time.

\subsection{Remark about the Over-budget}\label{subsec:control}

Our algorithm's solution may be beyond the original budget $\bgt$, and the amount of over-budget is at most $\theta\bgt$. Recall that the error parameter $\theta$ is the approximation factor of algorithm $\rf$ evaluating the routing cost. Obviously, the smaller $\theta$ is, the less the over-budget will be.

In many practical routing constraint settings, there exist $(1+\theta)$-approximation algorithms, where $\theta$ values are very small numbers, often less than one. 
For instance, there exists 1.55-approximation algorithm (i.e., $\theta=0.55$) for minimum-weight multicast routing problem~\cite{Du-2012-approx-bk}. 
If the routing cost is determined by Euclidean TSP, for instance, we can recruit a PTAS (Polynomial Time Approximation Scheme) for TSP problem and set its $\theta$ to be any small to control the over-budget below a very small scale. 
In detail, PTAS is a type of approximation algorithm for computation-intractable problems. A PTAS algorithm can achieve $(1+\theta)$-approximation for minimization, or $(1-\theta)$-approximation for maximization, where $\theta>0$; and for any fixed $\theta>0$, it runs in time polynomial in the input instance's size. For instance, there exists PTAS for Euclidean TSP problems, which finds a $(1+\theta)$-approximation solution in time of at most $\mathcal{O}(n(\log n)^{o(1/\theta)})$; there exists PTAS  for Subset-Sum problems, in time of $\mathcal{O}(n^3(\log n)^2/\theta)$.

In our algorithm, though a smaller $\theta$ of PTAS might lead to more computation time in Stage-1's cost evaluation, it can in some degree reduce the time cost in Stage-2 by reducing the size of $Y_i$. The following theorem gives an explanation about such a run-time reduction. 

\begin{theorem}\label{thr:stop2}
\emph{
Given $S_i$, if the $\rho$-\while stops with $X_i^\rho\subseteq S_i$ and the $\rf$-\while, with $X_i^\rf\subseteq S_i$, then we always have $|X_i^\rf \bs X_i^\rho|\leq 1$ for any feasible $0<\theta \leq \cmin/\bgt$, where $\theta$ is the error parameter of $\rf$. 
}
\end{theorem}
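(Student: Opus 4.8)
The plan is to build directly on the setup in the proof of Theorem \ref{thr:stop1}. Because the item chosen on line \ref{algo:cherryPick} is always the one of maximum marginal gain --- a choice that does not depend on which cost oracle is used --- both the $\rho$-\while and the $\rf$-\while examine $S_i$ in the same order $\langle s_1,s_2,\ldots,s_m\rangle$, and Theorem \ref{thr:stop1} already gives $X_i^\rho\subseteq X_i^\rf$. So the statement reduces to bounding, by $1$, the number of items $\rf$ accepts past the index at which $\rho$ stops. I would fix the shorthand $\rho_t=\rho(\{s_1,\ldots,s_t\})$ and $\rf_t=\rf(\{s_1,\ldots,s_t\})$, and write $X_i^\rho=\{s_1,\ldots,s_j\}$. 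If $j=m$, then $X_i^\rf\subseteq S_i=X_i^\rho$ forces $X_i^\rf=X_i^\rho$ and the claim is trivial; otherwise the $\rho$-\while rejected $s_{j+1}$, which means $\rho_{j+1}>\bgt$.

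The heart of the argument is then a short proof by contradiction. Suppose $|X_i^\rf\bs X_i^\rho|\geq 2$, so the $\rf$-\while accepted both $s_{j+1}$ and $s_{j+2}$. The acceptance test on line \ref{algo:rtEvaluation}, applied when $s_{j+2}$ is added, gives $\rf_{j+2}\leq(1+\theta)\bgt$. On the other hand, the per-item gradient bound for routing-based cost functions, $\rho(T\cup\{s\})-\rho(T)\geq\cmin$, gives $\rho_{j+2}\geq\rho_{j+1}+\cmin>\bgt+\cmin$; and since $\rf$ is a $(1+\theta)$-approximation of $\rho$ we have $\rf_{j+2}\geq\rho_{j+2}$, hence $\rf_{j+2}>\bgt+\cmin$. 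Using the hypothesis $\theta\leq\cmin/\bgt$, i.e.\ $\theta\bgt\leq\cmin$, this yields $\rf_{j+2}>\bgt+\theta\bgt=(1+\theta)\bgt$, contradicting the acceptance test. Therefore $\rf$ cannot accept two items beyond $s_j$, so $|X_i^\rf\bs X_i^\rho|\leq 1$.

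I do not expect a serious obstacle here: once Theorem \ref{thr:stop1} is in hand, the only care needed is bookkeeping --- keeping the approximation inequality correctly oriented as $\rho(\cdot)\leq\rf(\cdot)\leq(1+\theta)\rho(\cdot)$, and handling the degenerate terminations of either \while (running out of items), which only make the gap $|X_i^\rf\bs X_i^\rho|$ smaller. The genuinely load-bearing ingredient is the uniform lower bound $\cmin$ on the marginal cost of $\rho$: it is what converts the qualitative fact ``$\rho$'s next candidate is already over budget'' into the quantitative surplus $\bgt+\cmin$ that the tiny relaxation $\theta\bgt\leq\cmin$ is too small to cover.
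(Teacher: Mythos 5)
Your proof is correct and follows essentially the same route as the paper's: assume the gap is at least two, invoke the acceptance test $\rf_{j+2}\leq(1+\theta)\bgt$, use the $\cmin$ gradient bound and $\rf\geq\rho$ to get $\rf_{j+2}>\bgt+\cmin\geq(1+\theta)\bgt$, and conclude by contradiction. The only (harmless) difference is that you explicitly dispose of the degenerate case $j=m$ and argue directly for a gap of size $\geq 2$, whereas the paper fixes the gap at exactly two and appeals to Theorem \ref{thr:stop1} for the rest.
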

\begin{proof}
Since Theorem \ref{thr:stop1} has proven $X_i^\rho\subseteq X_i^\rf$ for any $\theta$, we here suppose that, given $S_i$, the $\rho$-\while can stop at $s_k$ if the $\rf$-\while stops at $s_{k+2}$. We next complete this proof by contradiction. 

From the conditions in which $\rho$ and $\rf$ stop, we have  (1) $\rf_{k+2}\leq(1+\theta)\bgt$, and (2) $\rho_{k}\leq\bgt$ but $\rho_{k+1}>\bgt$. According to the property of routing cost function, it is easy to know $\rho_{k+2}\geq\rho_{k+1}+c_{k+2}>\bgt+\cmin$, where $c_{k+2}$ is the visiting cost at item $s_{k+2}$. 
Because $\rf$ approximates $\rho$ and $\theta\leq\cmin/\bgt$, we have $\rf_{k+2}\geq\rho_{k+2}$, which further leads to
\begin{equation}\label{eqn:contradiction2}
\rf_{k+2}\geq\rho_{k+2}>\bgt+\cmin\geq\bgt+\theta\bgt\,.
\end{equation}

This inequality contradicts $\rf_{k+2}\leq(1+\theta)\bgt$. If $\rf$-\while  stops at $s_{k+2}$, therefore, $\rho$-\while cannot stop at $s_k$. Moreover, the $\rho$-\while cannot stop at $s_{j}$ for any $j< k$, too. Thus, $\rho$-\while can only stop either at $s_{k+1}$ or at $s_{k+2}$. This theorem holds true.
\end{proof}

If $\rf$ is a PTAS with $\theta\leq\cmin/\bgt$, then we will have $|Y_i|=|X_i^\rf\bs X_i^\rho|\leq 1$ at the end of Stage-1. It means that Stage-2 can finish with at most two calls of \unst. So a PTAS of $\rho$ with small $\theta$ can lower Stage-2's time cost. 
With Theorem \ref{thr:stop2}, we can readily have Corollary \ref{cor:theta}: the run-time and the budget control can be traded off according to practical settings and resource supply.

\begin{corollary}\label{cor:theta}
\emph{
Given $S_i$, if the $\rho$-\while stops with $X_i^\rho\subseteq S_i$ and the $\rf$-\while, with $X_i^\rf\subseteq S_i$, then we always have $|X_i^\rf \bs X_i^\rho|\leq r$ for any $0<\theta\leq r\cdot\cmin/\bgt$, where $r$ is a positive integer ranging in $[1, \bgt/\cmin)$. 
}
\end{corollary}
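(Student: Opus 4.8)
The plan is to reuse, almost verbatim, the contradiction scheme of the proof of Theorem~\ref{thr:stop2}, but stretch the item gap from two to $r+1$. First I would invoke Theorem~\ref{thr:stop1}, which already gives $X_i^\rho\subseteq X_i^\rf$ for every $\theta>0$. Writing the common cherry-picking order as $\langle s_1,s_2,\ldots,s_m\rangle$ with $m=|S_i|$ (this order is independent of the cost solver, exactly as in the proof of Theorem~\ref{thr:stop1}), the inclusion lets me assume $X_i^\rho=\{s_1,\ldots,s_q\}$ and $X_i^\rf=\{s_1,\ldots,s_p\}$ with $q\le p$, so that $|X_i^\rf\bs X_i^\rho|=p-q$. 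The target is then precisely $p-q\le r$, which I would prove by contradiction, supposing $p-q\ge r+1$, i.e. $q\le p-r-1$ (in particular $q<m$).

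From the two stopping rules I extract the two facts I need. Since $s_p$ was actually inserted into $X_i$ on line~\ref{algo:rtEvaluation}, its evaluated cost obeys $\rf_p=\rf(X_i^\rf)\le(1+\theta)\bgt$; and since the $\rho$-\while halted right after $s_q$ with $q<m$, the true optimal cost overshoots one step later, $\rho_{q+1}>\bgt$. Now I would chain the gradient property of the routing-cost function: inserting each of the $p-q-1\ge r$ items $s_{q+2},\ldots,s_p$ raises $\rho$ by at least $\cmin$, so $\rho_p\ge\rho_{q+1}+(p-q-1)\cmin>\bgt+r\cmin$. Invoking the hypothesis $\theta\le r\cmin/\bgt$, hence $\theta\bgt\le r\cmin$, this yields $\rho_p>(1+\theta)\bgt$. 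Finally, because $\rf$ is a $(1+\theta)$-approximation of the \emph{minimum} routing cost $\rho$, we have $\rf_p\ge\rho_p$, so $\rf_p>(1+\theta)\bgt$, contradicting $\rf_p\le(1+\theta)\bgt$. Therefore $p-q\le r$, which is the claim.

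I do not expect a genuine obstacle; the argument is uniform and needs no case split on \emph{why} the $\rf$-\while stopped, since $\rf_p\le(1+\theta)\bgt$ holds for the inserted set whether the loop broke on the relaxed budget or exhausted $S_i$, and the contradiction hypothesis $p-q\ge r+1$ already forces $q<m$ (so $\rho_{q+1}$ is well defined). The only thing worth a remark is the stated range $r\in[1,\bgt/\cmin)$: its upper end merely keeps the induced error parameter $\theta\le r\cmin/\bgt$ strictly below $1$, and the bound is in any case vacuous once $r$ exceeds the largest cardinality $(1+\theta)\bgt/\cmin$ of a set admitted by the relaxed budget. Setting $r=1$ recovers Theorem~\ref{thr:stop2} (hence $|Y_i|\le 1$ for a PTAS with $\theta\le\cmin/\bgt$), so this corollary is exactly the quantitative form of the run-time versus over-budget trade-off announced above it.
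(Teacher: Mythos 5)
Your proof is correct and follows exactly the route the paper intends: the paper derives Corollary~\ref{cor:theta} as a ``readily'' obtained generalization of Theorem~\ref{thr:stop2}, and your argument is precisely that proof's contradiction scheme with the item gap stretched from $2$ to $r+1$, chaining the gradient bound $\rho_p\ge\rho_{q+1}+(p-q-1)\cmin>\bgt+r\cmin\ge(1+\theta)\bgt$ against $\rf_p\le(1+\theta)\bgt$. Your closing remarks on the range of $r$ and on recovering Theorem~\ref{thr:stop2} at $r=1$ are consistent with the paper's discussion.
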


\section{Experiments}\label{sec:Experiment}

In this section we conduct numerical case studies to investigate our algorithm's performance. 
We present two baseline algorithms, \bsRand and \bsRMax. Baseline Rand continually selects items in a random way until the budget is violated.  
Like ours, baseline \bsRMax is also an iterated greedy algorithm, which adds the item $\arg\max\{f_X^{+}(x)/\rf_X^{+}(x)|x\in \Omega\}$ in each iteration, until it uses up the budget or its marginal profit turns negative. By maximizing a benefit-cost ratio, \bsRMax follows an intuitive idea: preferring to select the item that can bring higher marginal gain with less cost increase.

\subsection{Case 1: Diversified Task Offloading in Big Data}

This case study is about the Example-1 early mentioned in section \ref{sub:problem}. We disseminate a personalized movie recommendation task to a set of processors that are curated in a complete undirected graph; the edge weight represents the delay between processors. Each processor has a local movies' set, and the movies' meta data is foreknown to the user. 

\begin{figure}
\centering
\includegraphics[width=.4\textwidth]{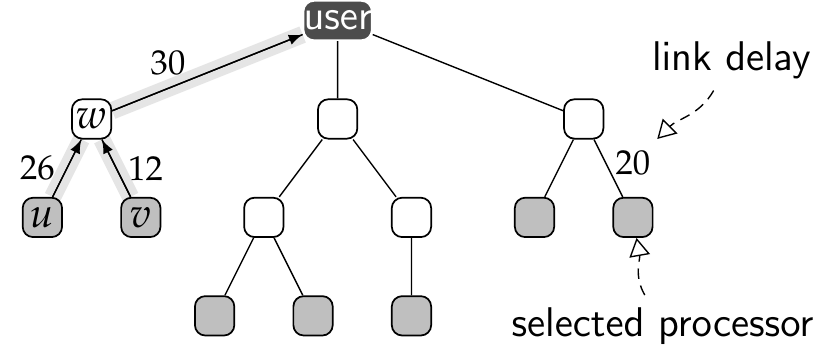}
\caption{Example of a multicast delay-aware routing tree.}
\label{fig:multicastTree}
\end{figure}

Fig. \ref{fig:multicastTree} shows a multicast routing tree that spans all the selected processors. Each processor executes the task and returns a result (in the form of packet), along the tree, to the user. We here assume that the links are all simplex and associated with a delay (i.e., edge weight).  In Fig. \ref{fig:multicastTree}, for example, processor $w$ first takes (26+12) time units to collect the packets from both $u$ and $v$, and then takes 30 time units to forward a combined (or compressed) packet to the user. In the multicast routing tree, therefore, the sum of all edge weights measures the total time for the user to retrieve the results from all the selected processors. 

Our goal in this example is to recommend a list of more representative movies while the response time stays under a certain time budget. 
Our dataset comes from MovieLens \cite{movielens} which involves the user information and their ratings of movies. However, the dataset is sparse in rating because most users rated only a handful of movies, i.e., the user-rating matrix is of low-rank. To address this issue, we used the PMF (probabilistic matrix factorization) model \cite{Salakhutdinov-2007-matrix} to generate a linearly-scaled rating matrix ${\rm M}_{k\times n}$ for $k$ users and $n$ movies.
We used the function in Eq. \eqref{diversity1} to calculate the objective value or the \emph{score} of movie recommendation, and set $\lambda$ to one. 
For two movies $i$ and $j$, we evaluated their similarity $s_{i,j}$ by using the inner product of their non-normalized feature vectors. 
In experiments, we assigned to each link a uniform random delay ranging from one to two hundred time units; and we applied the KMB~\cite{Kou-1981-steiner}, a 2-approximation algorithm, to producing a delay-minimum multicast tree over the processors selected in each iteration.

\begin{figure}[htbp]\centering
\subfigure[\sf ours]{
\includegraphics[width=0.20\textwidth]{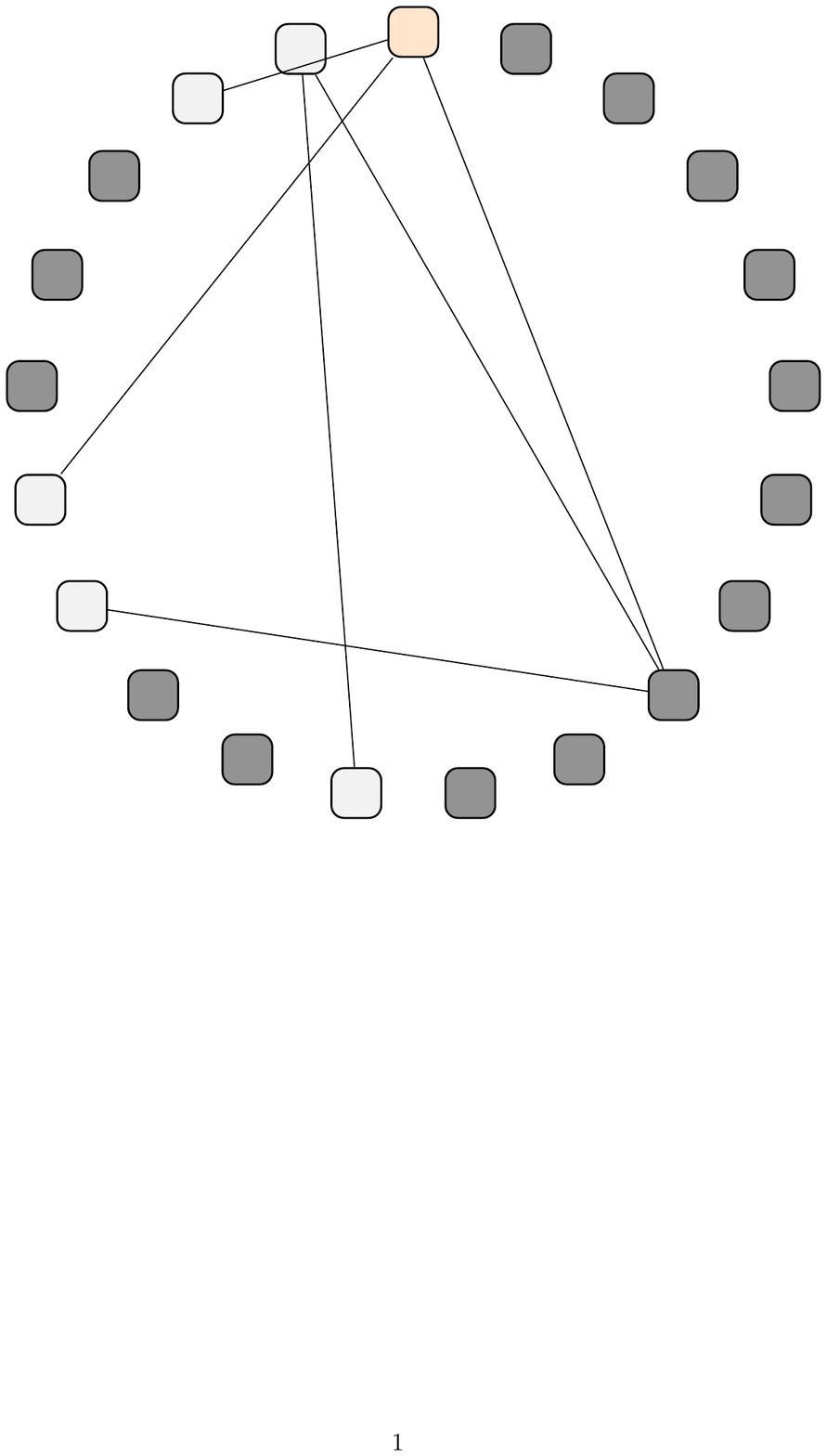}
\label{fig:graphOurs}
}\hfil
\subfigure[\sf rMax]{
\includegraphics[width=0.20\textwidth]{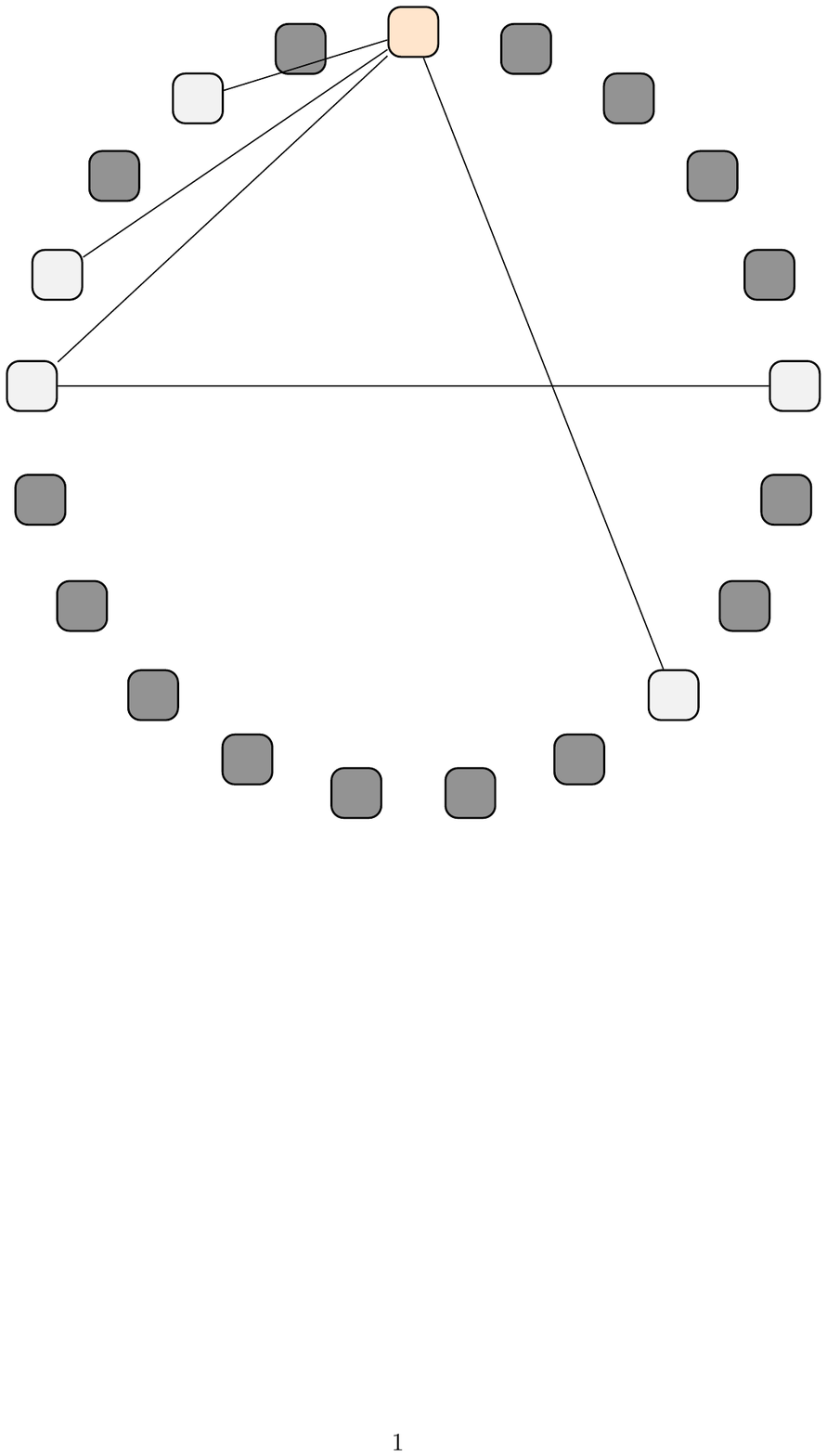}
\label{fig:graphrMax}
}\hfil
\subfigure[\sf Rand]{
\includegraphics[width=0.20\textwidth]{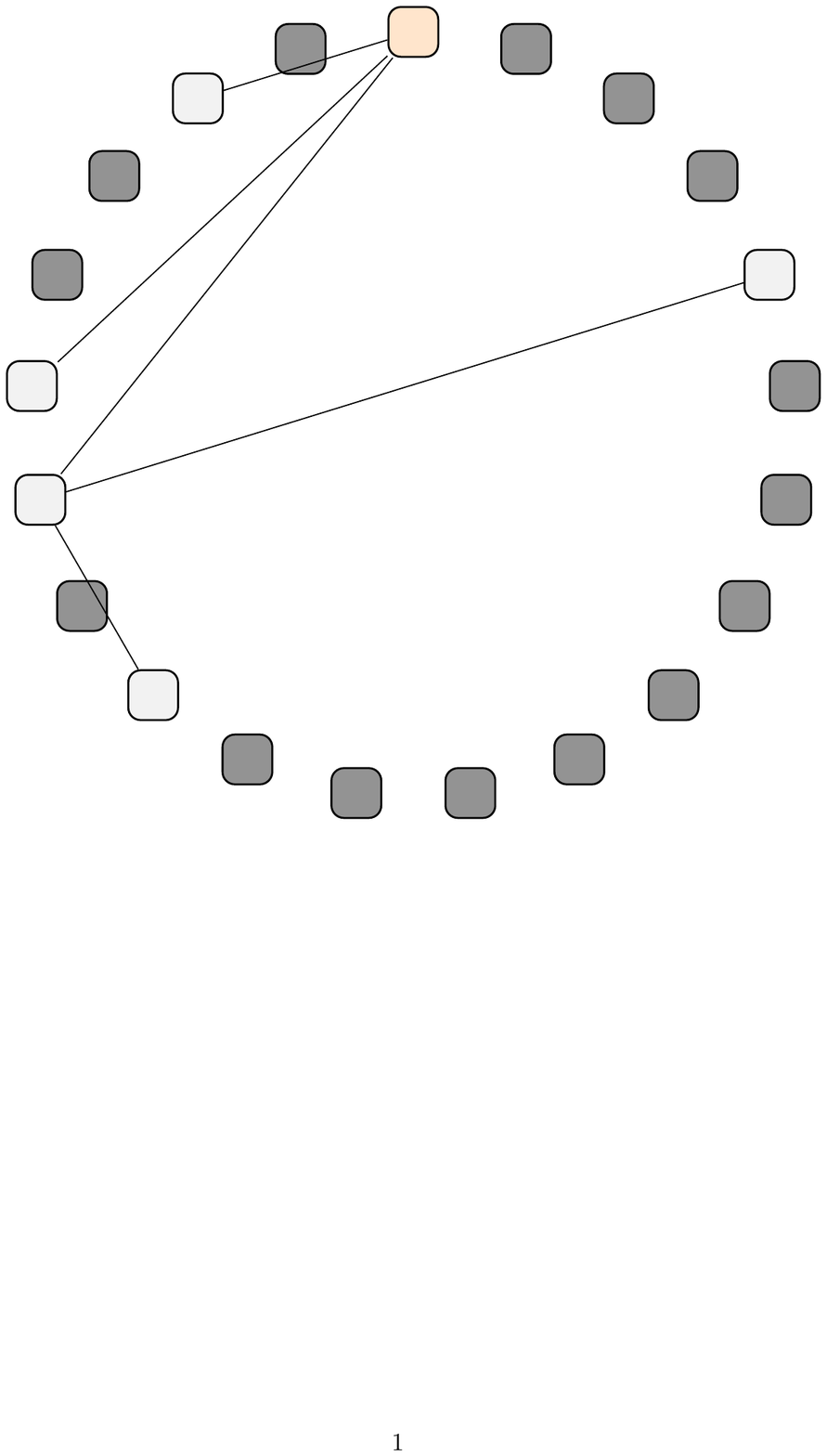}
\label{fig:graphRand}
}
\caption{Illustration of three algorithms in processor selection.}
\label{fig:multicastTrees}
\end{figure}

\begin{figure}\centering
\includegraphics[width=.685\textwidth]{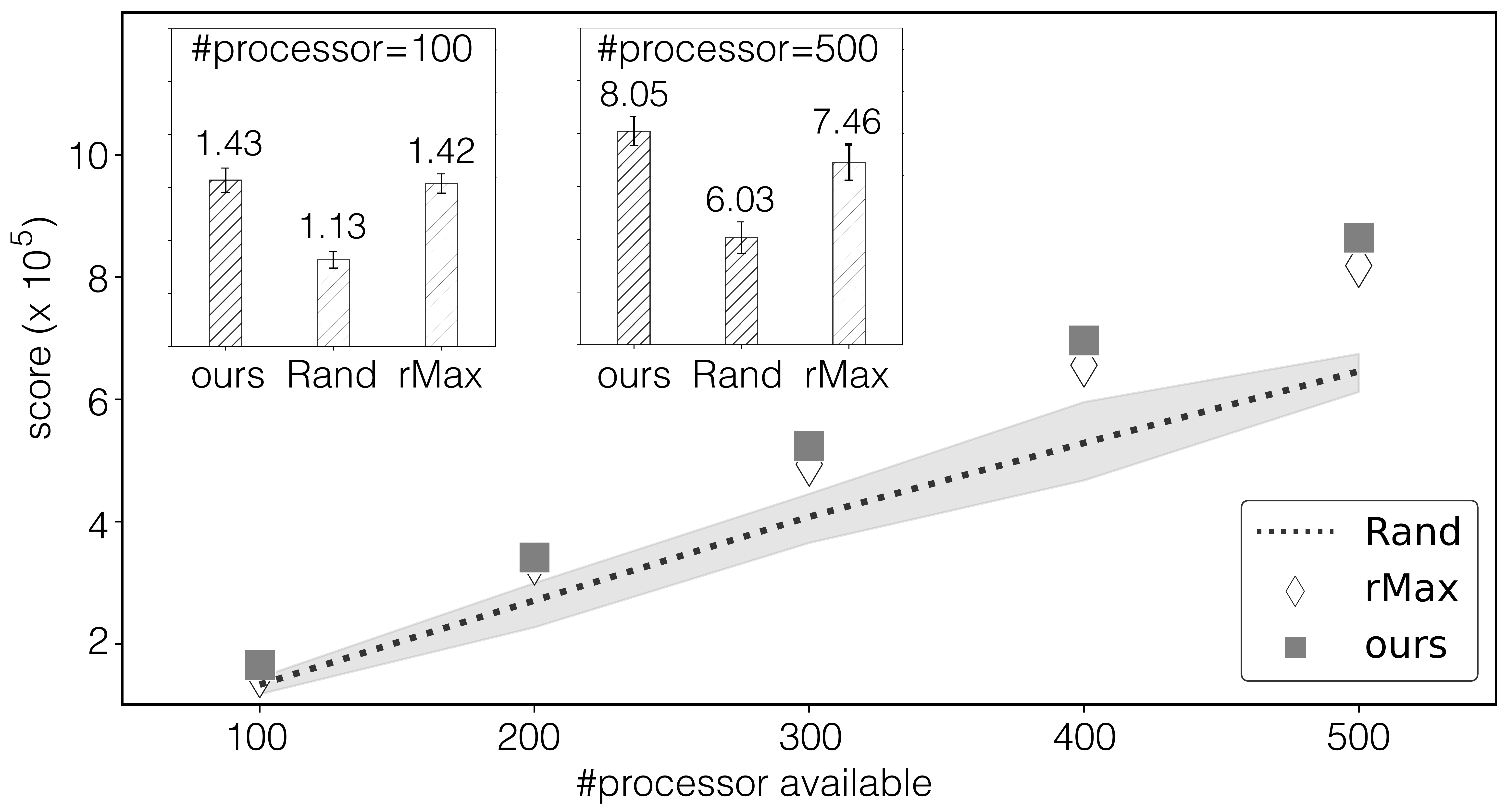}
\caption{Comparison of three algorithms under different network scales, where gray region is Rand's error band.}
\label{fig:scoreVsSize}
\end{figure}

Fig.~\ref{fig:multicastTrees} shows the processors selected by the  three algorithms to complete movie recommendation in the case where there are 20 available processors; for clarity, Fig.~\ref{fig:multicastTrees} only shows the result of a small-scale experiment. 
Fig.~\ref{fig:scoreVsSize} plots the effect of network scale on recommendation performance. For the three algorithms, an intuitive result is that the more processors (i.e., the more movies) are available, the higher the scores. For instance, with 100 processors, our algorithm can achieve a recommendation score of $1.43\times 10^5$, while, with 500 processors, its recommendation score is $8\times 10^5$, almost five times the score under 100 processors.
Such a comparison shows that more processors can bring a larger search space and our algorithm can find a better solution.
Of these three algorithms, \bsRand's recommendation scores the lowest for the two network scales. From the two sub-plots of Fig.~\ref{fig:scoreVsSize}, we can see that, for the 100-processor case, our algorithm's score is 0.7\% higher than \bsRMax's; however, for the 500-processor case, the performance enhancement stretches to 6.7\%.

\begin{figure}\centering
\subfigure[\sf \bsRand]{
\includegraphics[width=0.20\textwidth]{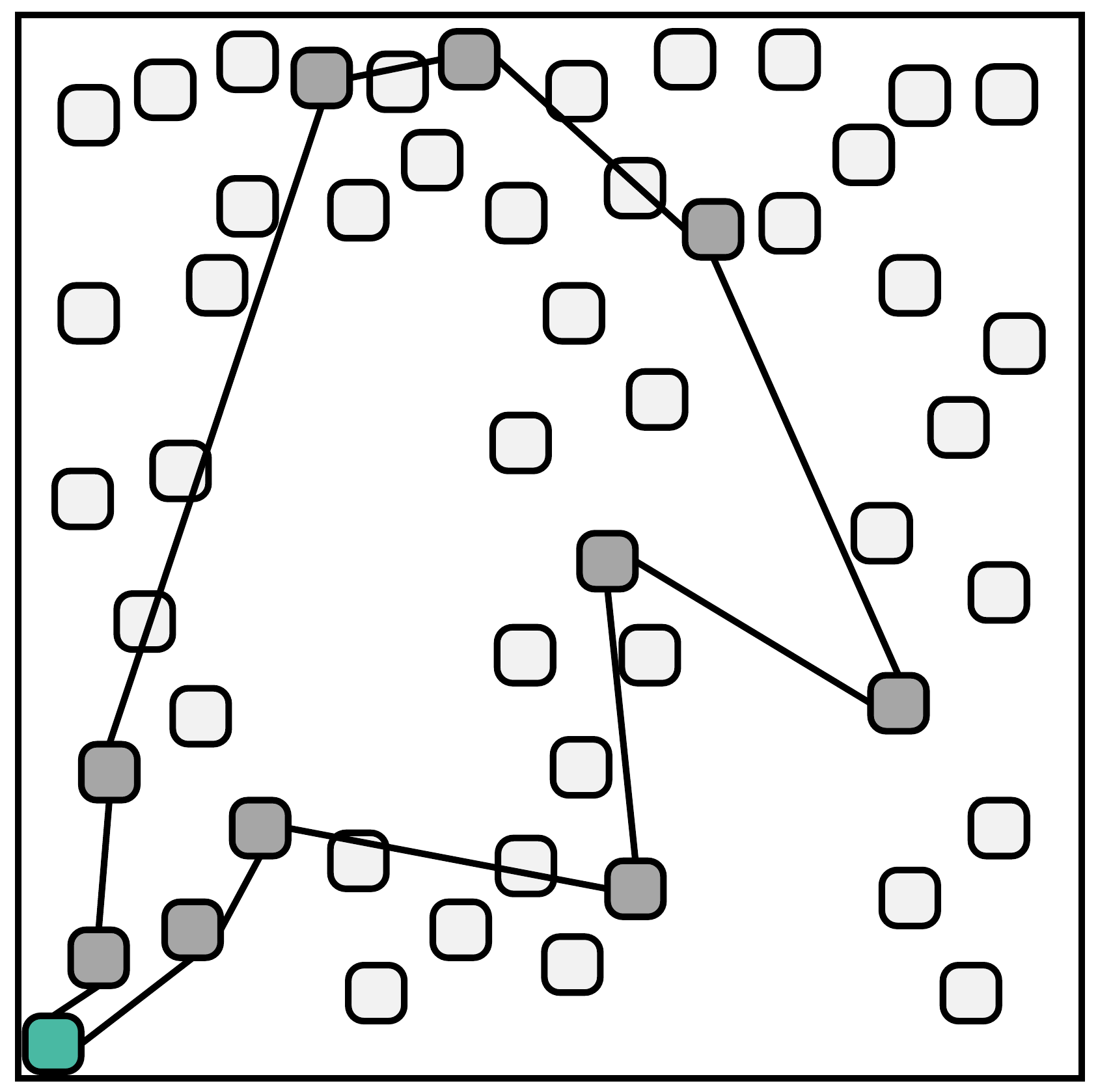}
\label{fig:senseRand}
}\hfil
\subfigure[\sf \bsRMax]{
\includegraphics[width=0.20\textwidth]{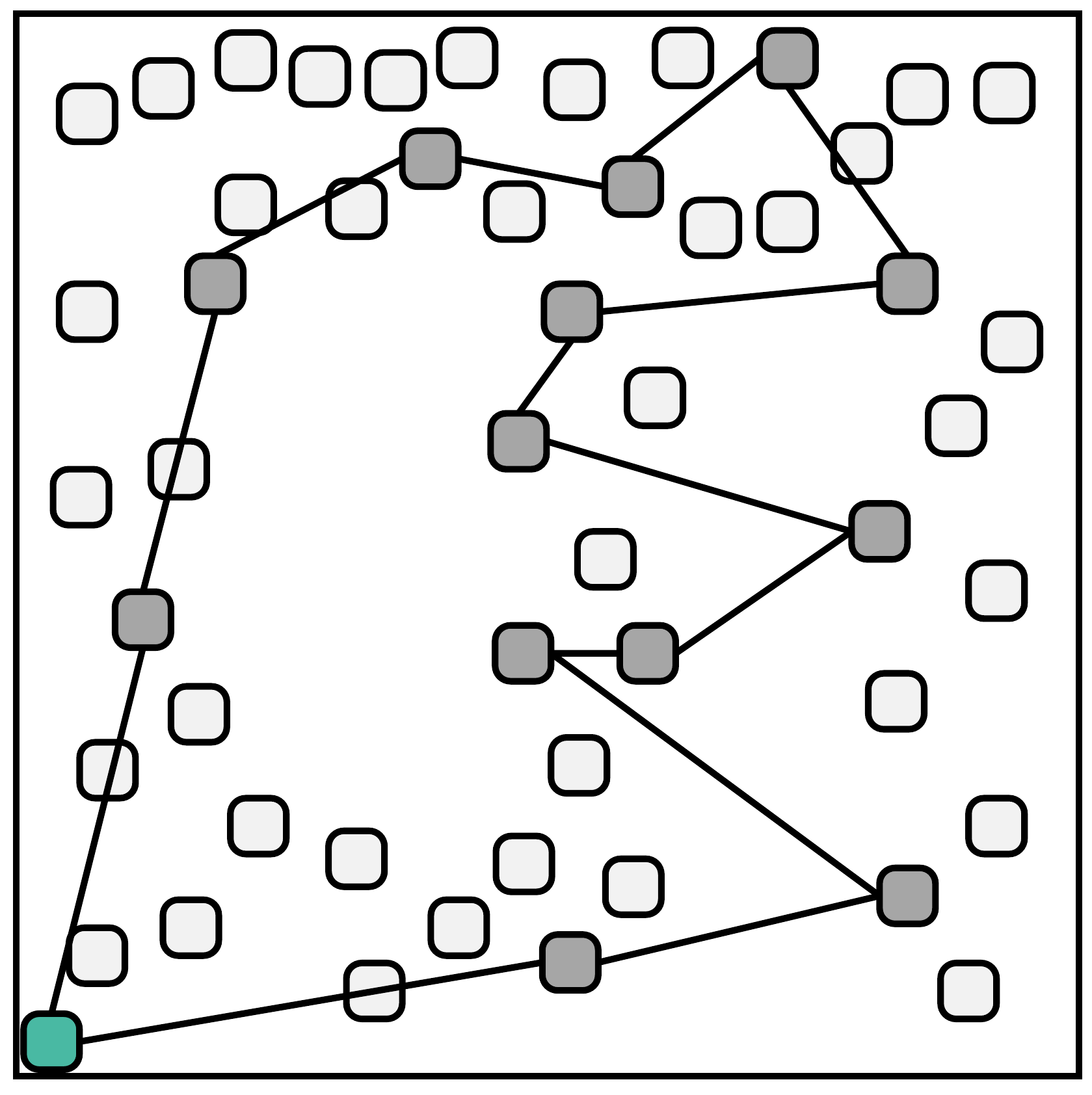}
\label{fig:senseRMax}
}\hfil
\subfigure[\sf ours]{
\includegraphics[width=0.20\textwidth]{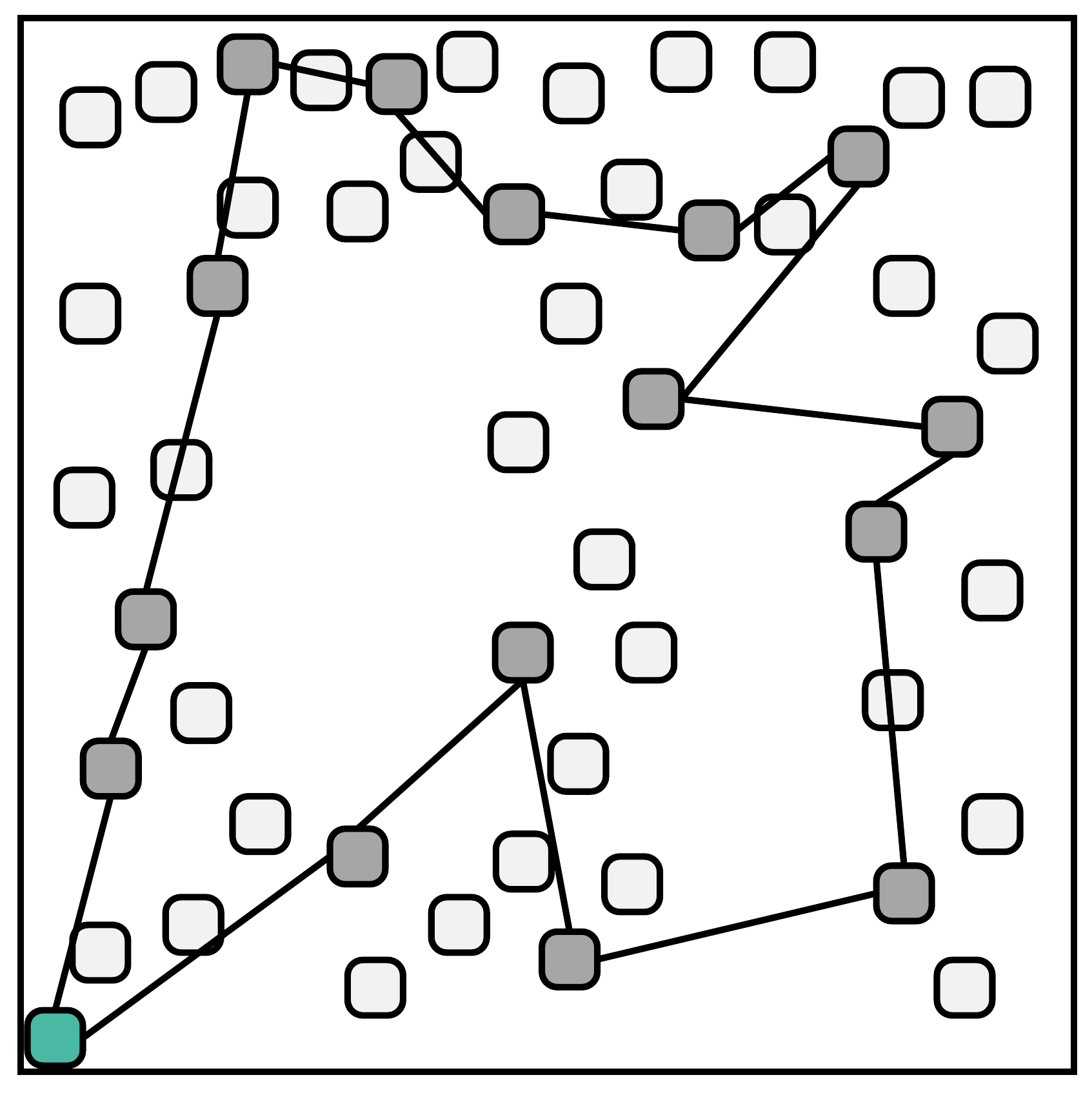}
\label{fig:senseOurs}
}
\caption{Illustration of three algorithms in the selection of PoI's (gray squares) and the robotic data-collecting trajectory.}
\label{fig:sensing}
\end{figure}

\subsection{Case 2: Informative Robotic Data Collection}

We consider a scenario of Example-2, where 45 points of interest (PoI's) are specified within a $35\times 40$ square monitoring area. A  charging depot is located at the corner of the monitoring area, and a robot departs from the depot to visit the PoI's to collect spatial data. The robot consumes energy in travel and data collection, and it must return to the depot for energy replenishment before it runs out of energy. 
We assume that the traveling-energy rate is 0.6  (i.e., it needs to pay $0.6d$ energy units for moving a distance of $d$), and the data-collecting energy at some PoI is a random number within $(0,2)$ that is specified in advance. 
We here use mutual information to measure how informative a subset of PoI's is, and create the spatial correlation between all the PoI's, from the data set obtained in a testbed experiment~\cite{Krause-2008-submodular-gaussian}. This testbed deployed 54 sensors to monitor the environmental temperature and humidity in an area of $35\times 40\mbox{m}^2$. 
To determine robot's energy-least TSP tour, we use the PTAS algorithm proposed in \cite{Mitchell-2007-TSPN} and set its error parameter $\theta$ to 0.1 in our experiments.

In this case, our goal is to find a data-collection schedule for the robot, such that the mutual information can be maximized while its energy budget is not violated. Appendix \ref{appdxC:MutualInfo} calculates the mutual information on the basis of Gaussian process model. 
Fig. \ref{fig:sensing} compares our algorithm and the two baselines, in which the robot's energy budget is set to 120 energy units. The PoI's selected by our algorithm are the most evenly in terms of spatial distribution. 

Fig. \ref{fig:gain} plots how the three algorithms will perform as the energy budget, and how their information gains change as the number of selected PoI's increases. 
An abundant energy supply usually leads to higher information gain, but it is not a panacea: for instance, \bsRMax's information gain can no longer be enhanced when the energy budgets reach 240. In particular, although \bsRand recklessly tries to use up budget to observe as many PoI's as possible, its information gain turns down as the budget goes higher than 320 energy units, because the mutual information measure is nonmonotone (i.e., more PoI's could lead to lower mutual information values). 

\begin{figure}\centering
\subfigure[under different budgets]{
\includegraphics[width=0.35\textwidth]{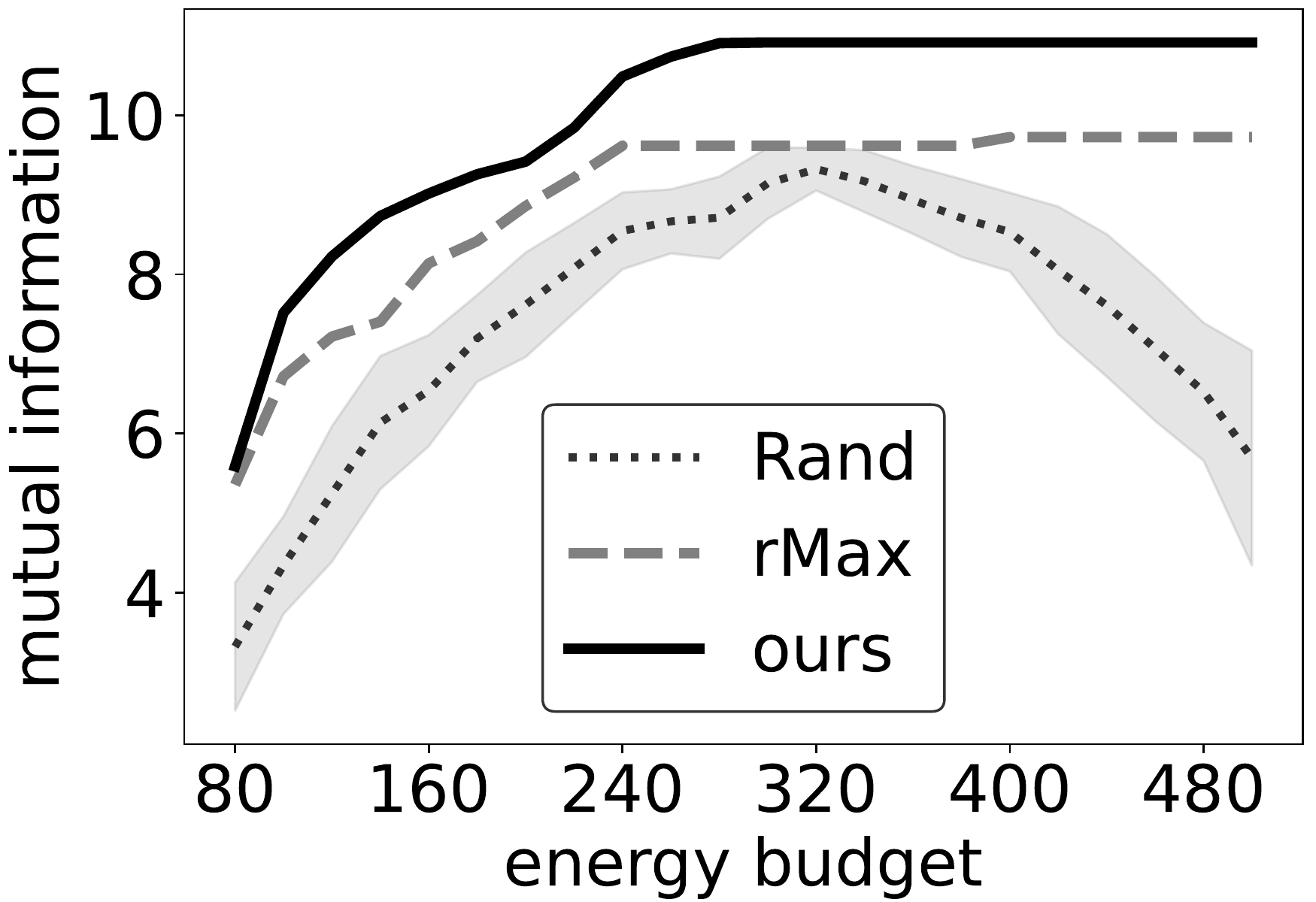}
\label{fig:MIvsBudget}
}\hfil
\subfigure[with different PoI counts]{
\includegraphics[width=0.35\textwidth]{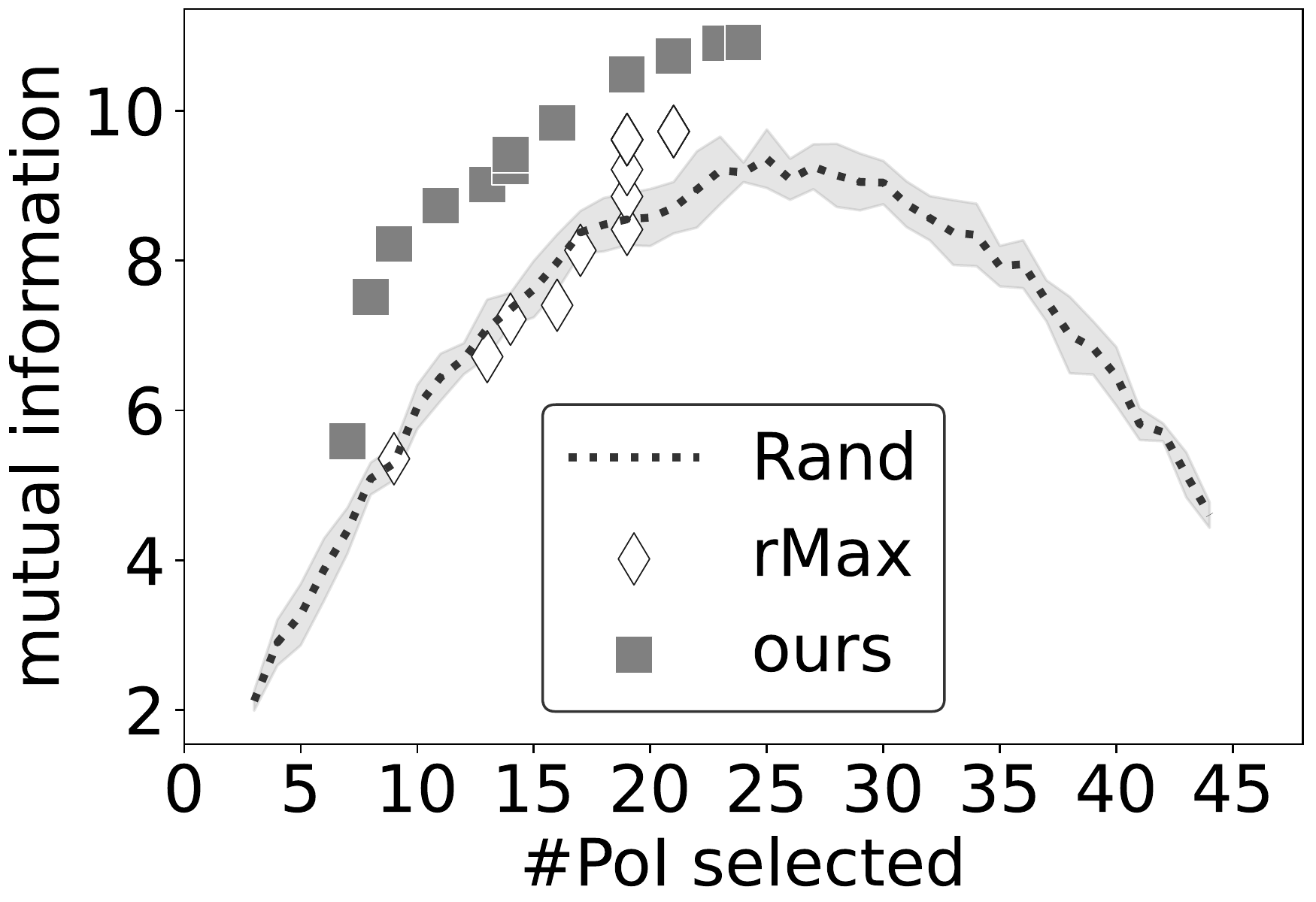}
\label{fig:MIvsNum}
}
\caption{Comparison of three algorithms in information, where gray regions are Rand's error band.}
\label{fig:gain}
\end{figure}

Fig. \ref{fig:energyAlloc} illustrates the energy allocation of three algorithms under repeated experiments with random seeds. In the case of budget=100, our algorithm pays 38.5\% of, and 51.9\% of, the energy budget for collecting data and for traveling, respectively, while \bsRand and \bsRMax  pay less energy for collecting data. On average, our algorithm visits about seven PoI's, while the two baselines can visit less-than-four PoI's. 
On the other hand, the total energy consumption rates are 90\%, 78\%, and 68\% for our algorithm, \bsRand, and \bsRMax, respectively. 
Similar comparison can also be found in the case of budget=200. Fig. \ref{fig:energyAlloc} shows that our algorithm can exploit the limited energy budget more effectively than the two baselines.

\begin{figure}\centering
\subfigure[budget=100]{
\includegraphics[width=0.375\textwidth]{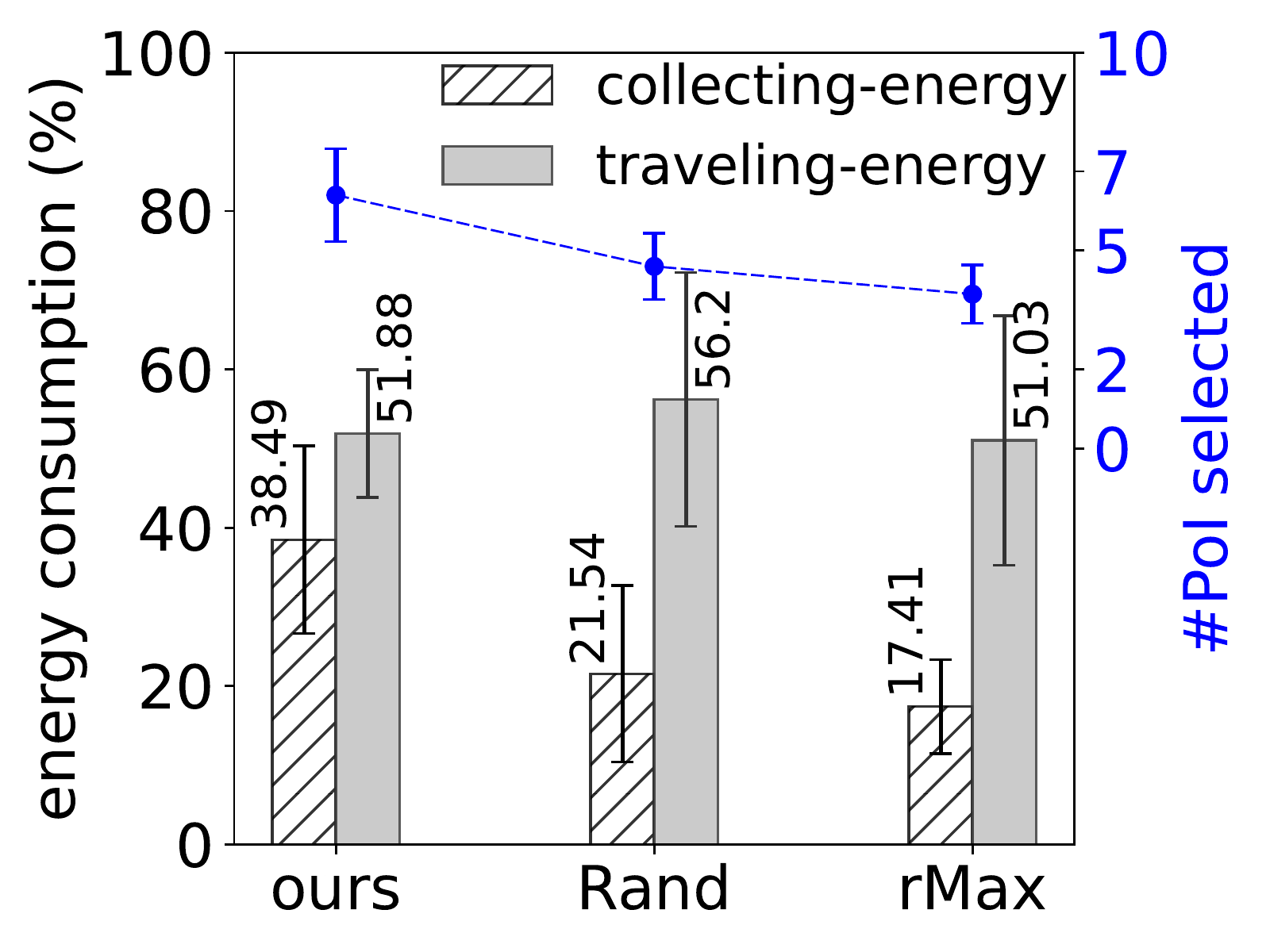}
\label{fig:budget100}
}\hfil
\subfigure[budget=200]{
\includegraphics[width=0.375\textwidth]{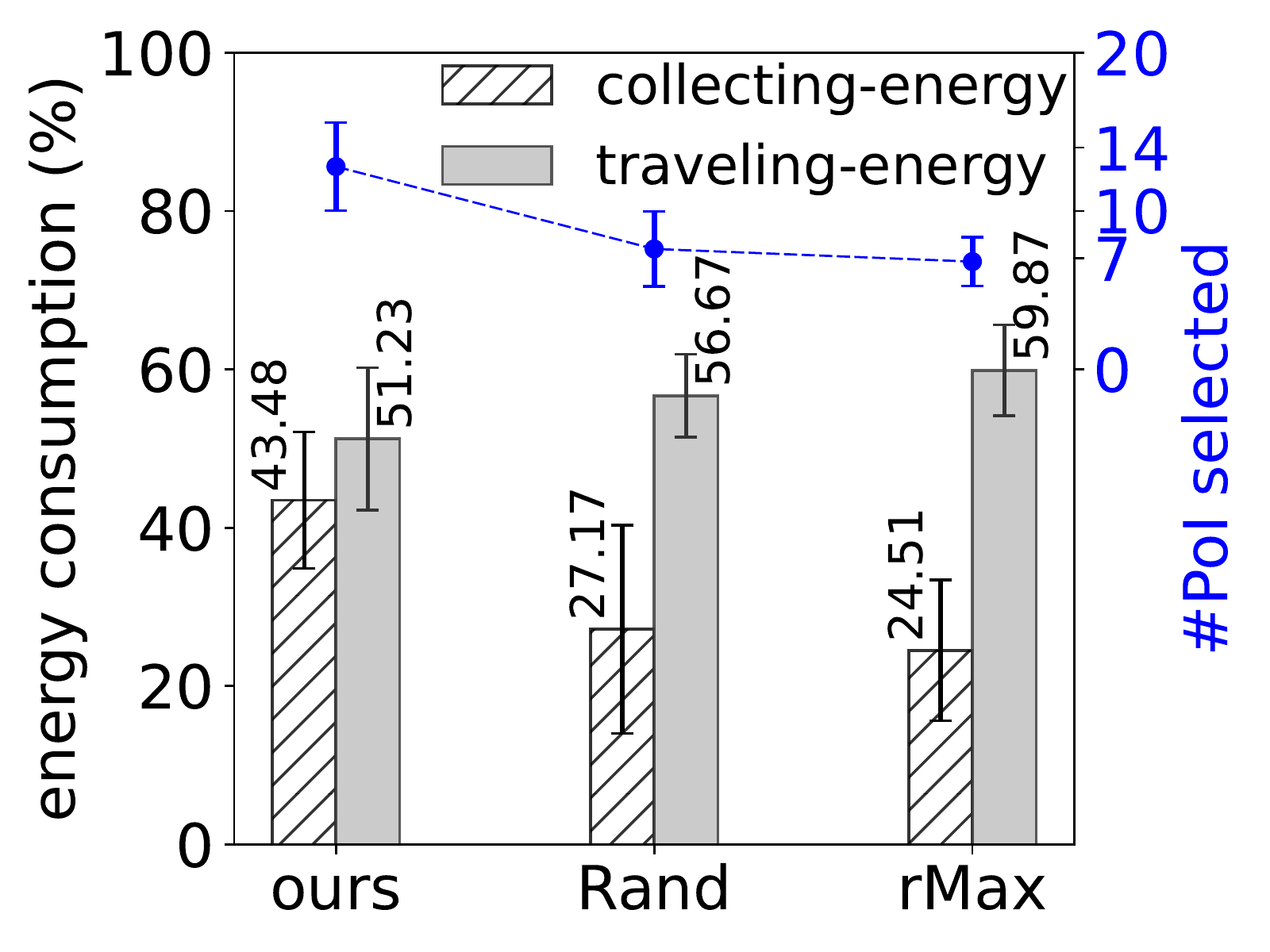}
\label{fig:budget200}
}
\caption{Comparison of three algorithms in energy efficiency.}
\label{fig:energyAlloc}
\end{figure}

\section{Conclusion}\label{sec:Conclustion}

In this paper, we have investigated the nonmonotone submodular maximization problem with routing constraint,  proved it to be a $k$-system-constrained nonmonotone submodular maximization problem, and proposed a $[1/4k, 1+\theta]$-bicriterion approximation algorithm. Considering the intractability of constraint evaluation, we draw an elegant connection between the constraint evaluation's error parameter (i.e., $\theta$) and our algorithm's overall performance. Further, we have developed the machinery for proving the performance guarantee; we believe it can shed light on maximizing nonmonotone submodular function whose constraint evaluation is generally NP-hard.
In addition, we have applied our algorithm to two motivating examples. The experimental results demonstrate our two-stage algorithm's effectiveness.





\appendix

\renewcommand{\theequation}{\thesection.\arabic{equation}}

\section{The deterministicUSM algorithm}\label{appdxA:deterministicUSM}

The \textsf{deterministicUSM} algorithm is designed in \citep{Buchbinder-2012-nonmonotone-submodular} to maximize nonmonotone submodular function without constraints. This algorithm is described in Algorithm \ref{USM} (with a few notation modifications from the original).

\SetAlgorithmName{Algorithm}{}{}
\begin{algorithm}
\caption{\textsf{deterministicUSM}}\label{USM} 
\DontPrintSemicolon
\SetKwInOut{Input}{input}\SetKwInOut{Output}{output}
\SetKwFor{While}{while}{}{end}
\SetKwFor{For}{for}{}{end}
\Input{$\Omega$ (items in some arbitrary order)}
\Output{$X_n\subseteq\Omega$ and $f(X_n)$}
\BlankLine
$X_0 \leftarrow \emptyset$ and $Y_0\leftarrow\Omega$ \;
\For{$i=1$~\emph{to}~$|\Omega|${~~{\tcp*[h]{$u_i$: the $i$-th item of $\Omega$}}}}{ 
	$a_i \leftarrow f(X_{i-1}+u_i) - f(X_{i-1})$ \;
	$b_i \leftarrow f(Y_{i-1}\bs u_i) - f(Y_{i-1})$ \;
	\eIf{$a_i\geq b_i$}{
		$X_{i}\leftarrow X_{i-1} + u_i$ and $Y_{i}\leftarrow Y_{i-1}$ \;
	}{
		$X_{i}\leftarrow X_{i-1}$ and $Y_{i}\leftarrow Y_{i-1}\bs u_i$ \;
	}
}
\Return $X_n$
\end{algorithm}

\section{Calculating Data Diversity}\label{appdxB:similarity}

Data diversity or data representativeness is commonly modeled as set functions that are nonnegative, submodular, and nonmonotone. 
For example, the movie recommendation problem is investigated in  ~\cite{Mirzasoleiman-2016-submodular}; and over a movie set $\Omega$, the following function is defined to evaluate the representativeness of a subset $S\subseteq\Omega$, or to give a score for $S$:

\begin{equation}\label{diversity1}
f(S)=\sum_{i\in\Omega\bs S}\sum_{j\in S}s_{i,j}-\lambda\sum_{i\in S}\sum_{j\in S}s_{i,j}
\end{equation}

\noindent where $s_{i,j}$ measures the similarity between the two movies $i$ and $j$, and $0<\lambda\leq 1$. When $\lambda=1$, the above function is the cut-function. This utility function is nonnegative and nonmonotone.

Similarly, another objective function is proposed in~\cite{Mirzasoleiman-2016-submodular} for the personalized image summarization problem that aims at finding the most representative images. This function is defined as 

\begin{equation}\label{diversity2}
f(S)=\sum_{i\in\Omega\backslash S}\,\max_{j\in S}\{d_{i,j}\}-\frac{1}{|\Omega\backslash S|}\sum_{i\in S}\sum_{j\in S}d_{i,j}
\end{equation}

\noindent where $d_{i,j}$ measures the similarity between the two images $i$ and $j$. Such an objective function is also nonmonotone submodular. 

\section{Calculating Mutual Information}\label{appdxC:MutualInfo}

Given $S\subseteq\Omega$, the mutual information of $S$ is expressed with 

\begin{eqnarray}\label{eqn:MI}
I(S; \Omega\bs S) & = &  h(\Omega\bs S) - h(\Omega\bs S|S) \notag\\
& = & h(S) + h(\Omega\bs S) - h(\Omega)\,,
\end{eqnarray}

\noindent where $h(\cdot)$ is the entropy function.   
Gaussian process is an excellent model of quantifying the uncertainty about prediction. A multivariate Gaussian process is usually featurized by a covariance matrix, $\mathbf{B}$, which is a kind of priors about all variables. 
In general, the joint entropy of any $k$-variate gaussian $S$ can be given by

\begin{equation}
h(S) = \frac{1}{2}\log\left[(2\pi e)^{k}\cdot \det(\mathbf{B}_S)\right]
\end{equation}

\noindent where $\mathbf{B}_S$ is the submatrix of $\mathbf{B}$, with the rows and columns both indexed by $S$.

In the Case-2 example, as the objective function, $I(S;\Omega\bs S)$ quantifies the expected reduction of uncertainty about the unobserved PoI’s upon the revelation of $S$.

\end{document}